\theoremstyle{definition}
\newcommand\reallywidehat[1]{%
\savestack{\tmpbox}{\stretchto{%
  \scaleto{%
    \scalerel*[\widthof{\ensuremath{#1}}]{\kern.1pt\mathchar"0362\kern.1pt}%
    {\rule{0ex}{\textheight}}
  }{\textheight}%
}{2.4ex}}%
\stackon[-6.9pt]{#1}{\tmpbox}%
}
\newtheorem{theorem}{Theorem}
\newtheorem{corollary}{Corollary}[theorem]
\newtheorem{lemma}{Lemma}[theorem]
\newtheorem{remark}{Remark}[theorem]
\newtheorem{example}{Example}[theorem]
\newtheorem{proposition}{Proposition}[theorem]
\begin{document}

\title{Three-level qualitative classification of financial risks under varying conditions through first passage times}


\setlength{\droptitle}{-2cm}  
\author{
  Carlos Bouthelier-Madre\thanks{carlosbouthelier@gmail.com},
  Carlos Escudero\thanks{cescudero@mat.uned.es} \\
  \small Departamento de Matem\'aticas Fundamentales,\\ \small Universidad Nacional de Educaci\'on a Distancia, Madrid, Spain
}
\date{\today} 
\maketitle
\vspace{-0.5 cm}
\begin{abstract}
This work focuses on financial risks from a probabilistic point of view. The value of a firm is described as a geometric Brownian motion and default emerges as a first passage time event. On the technical side, the critical threshold that the value process has to cross to trigger the default is assumed to be an arbitrary continuous function, what constitutes a generalization of the classical Black-Cox model. Such a generality favors modeling a wide range of risk scenarios, including those characterized by strongly time-varying conditions; but at the same time limits the possibility of obtaining closed-form formulae. To avoid this limitation, we implement a qualitative classification of risk into three categories: high, medium, and low. They correspond, respectively, to a finite mean first passage time, to an almost surely finite first passage time with infinite mean, and to a positive probability of survival for all times. This allows for an extensive classification of risk based only on the asymptotic behavior of the default function, which generalizes previously known results that assumed this function to be an exponential. However, even within these mathematical conditions, such a classification is not exhaustive, as a consequence of the behavioral freedom that continuous functions enjoy. Overall, our results contribute to the design of credit risk classifications from analytical principles and, at the same time, constitute a call of attention on potential models of risk assessment in situations largely affected by time evolution.
\end{abstract}

\noindent\textbf{Keywords:} Financial risks, first passage times, geometric Brownian motion, probability of default, generalized Black-Cox model.
\\ \indent 2020 \textit{MSC: 60G40, 60J70, 91G40.}

\section{Introduction}

Credit risk assessment has been recognized for decades as a topic of key importance in the field of mathematical finance. Evaluating the risk of default is far from being a mere motivation for theoretical questions, but an issue of great practical application~\cite{credit2004bielecki}. Default events are often modeled as first passage time (FPT) events: they occur at the first time a company asset value crosses a critical threshold, below which liabilities can no longer be met. This has a natural formulation within probability theory as a FPT problem. In fact, that is the formulation chosen by the classical Black-Cox model of default risk~\cite{black1976valuing,brigo2009credit}.

In this work, we incorporate the main assumptions of the Black-Cox model: the firm asset value will be assumed to be a geometric Brownian motion and the critical threshold signaling default will be a continuous function of time, from now on called ``the barrier function''. Within this field,  there is an abundant literature focused on finding closed-form expressions for the mean of the FPT or the probability of default. Therefore, the classical Black-Cox framework concentrates on a limited range of barriers, such as exponential or constant functions, that allows the derivation of explicit formulae; some comprehensive treatments can be found, for example, in~\cite{bielecki2001credit,jeanblanc2009mathematical,karatzas1998methods}. These results were foundational and paved the way for the application of stochastic analysis in structural credit risk theory.

While these simple barrier shapes have their range of application, there is an undeniable interest in the study of first passage problems through more complex barriers. However, more involved barrier shapes tend to be elusive to its explicit resolution. In fact, the diversity of approaches employed to tackle first passage problems is vast and goes well beyond the realm of stochastic analysis to include partial differential equation methods~\cite{fox1986functional,tuckwell1984first} and integral equation approaches, such as Volterra or Fredholm equations~\cite{Jaimungal,peskir2006optimal}; but still they produce a limited variety of explicitly solvable examples. This has led to the use of piecewise definitions for the barriers~\cite{fu2010linear,guillaume2024computation,jin2017first}, where either the pieces are taken to be elementary functions that are solvable, or their validity needs to be falsified against the numerical resolution of the problem~\cite{fu2010linear}.

Yet, a different attempt of generalization of the classical theory of structural credit risk is the one pushed forward by the so-called Analytically Tractable FPT (AT1P) models, see \cite{brigo2009credit,molini2011first,rapisarda2005pricing}. These models allow to find more solutions at the price of introducing an interdependence between barrier shape and volatility. In particular, only barrier functions, henceforth denoted $B(t)$, of the precise form
$$
\ln[B(t)/B(0)] \propto \int\limits_0^t\sigma^2(s) \, ds,
$$
are permitted, where $\sigma(t)$ denotes the time-dependent volatility and $t \ge 0$. Indeed, AT1P models highlight, on one hand, the interest in obtaining new explicitly solvable examples and, on the other, the difficulty in getting so. This interest is not only theoretically motivated, but has a practical aspect connected to the intensity-based models used in real markets: the deduction of closed-form expressions for FPTs in structural models with more general barriers could help refine real-world market-calibrated models~\cite{duffie2001term,jarrow2012structural}.

Given this clear trend in obtaining full mathematical results for FPTs with arbitrary barriers, which is hindered by the troublesome search of new explicitly solvable examples, we will try herein a different approach. We propose a qualitative classification of risk into three levels: low, medium, and high. The low risk is identified with a positive probability of not entering into default at any time. High risk is associated with a finite mean FPT, which provides a characteristic time scale for the default to happen.  The intermediate situation is characterized by an almost surely finite first passage time with divergent first moment. Such a classification is natural from the probabilistic viewpoint, as it does not rely on arbitrary numerical cut-offs, and opens the possibility for proving fully rigorous results. Part of our motivation also comes from the physics of phase transitions~\cite{quantum2021zinn}, which has already inspired other classifications of stochastic processes~\cite{noise1984horsthemke}, as in our case a risk status can change abruptly as the external conditions are modified. That is why we refer to {\it criticality} as the set of conditions that separates two risk statuses. 

A big part of our efforts focuses on near-exponentially moving barrier functions. The reason behind this choice is the fact that classical results on the Black-Cox model already successfully classify functions that move much faster or much slower than exponentials. Therefore, the outcome of our analysis is a finer classification of risk that is able to sort barriers whose behavior is doubtful or undecidable from the viewpoint of the standard Black-Cox paradigm. To achieve this goal we will make use of the properties of Brownian motion, proving results that are instrumental to our objectives. The mathematical machinery we will employ is that established in the field and recorded in monographs such as~\cite{handbook2012borodin,karatzas1991brownian,brownian2010morters,brownian2014sp}.

The outline of the paper is as follows. In  Section~\ref{sec:Inflation} we state the problem and fix our notation. We also emphasize the relevance of considering arbitrary barriers in finance, illustrating it with the effect of economic inflation on credit risk assessment, although other effects, such as that of extreme weather, are very well possible. In Section~\ref{sec:main_FPT}, we study the finer time dependencies of modeling safety covenants for future debt payment when a company is undecidable under the classical Black-Cox structural model of credit risk. We do it mathematically by analyzing FPTs for barriers that are neither constant nor exponential functions, and using our classification system. In Section~\ref{sec:critical_prob} we refine our analysis with regard to the separation between the low and intermediate risk zones. In Section~\ref{sec:mfpts} we deepen the analysis with respect to the separation between the intermediate and high risk zones. Some examples of how our theory can be improved in particular cases are set forth in Section~\ref{sec:examples}. Finally, in Section~\ref{sec:conclusions} we draw our main conclusions.

\section{The problem of FPTs for moving barriers and the role of inflation}
\label{sec:Inflation}

In this section, we state the general problem of first passage times through moving barriers as we will treat it to match structural models of default risk. We depart from the well-known Black-Cox model. In this model, a stochastic process $V_t$ solves the Black-Scholes equation:
$$
dV_t=\mu V_t dt + \sigma V_t dW_t, \;\; \left. V_t \right|_{t=0}= V_0\quad\Rightarrow\quad V_t=V_0\exp((\mu-\sigma^2/2)t+\sigma W_t),
$$
with $\mu \in \mathbb{R}$ and $\sigma, V_0$ being positive constants. The default event occurs when the process $V_t$ hits a time-dependent barrier of exponential shape $B(t)=K\exp(\gamma t)$, with $\gamma \in \mathbb{R}$ and $0<K<V_0$. In this case, explicit formulae for the probability density and all the moments of the FPT $\tau:=\inf\lbrace t\geq 0\vert V_t=B(t)\rbrace$ are well-known; see e.g.~\cite{bielecki2001credit}.

In the financial literature, in particular in that related to structural models of credit risk, $V_t$ is identified with the value of a firm (liquid) assets that can be used to face debt. The process $B(t)$ represents, for each $t$, the infimum of the values that $V_t$ can reach without entering default. It is, therefore, related to debt at a maturity time $T_M$, since $B(T_M)$ represents the value of faced debt. Merton original model considers that default can only occur at time $T_M$, and it happens whenever $V_{T_M} \le B(T_M)$, see~\cite{hull2004merton}. The Black-Cox implementation of the moving barrier $B(t)$ captures the market-observed fact that default can take place any time before maturity~\cite{credit2004bielecki}.

After the changes of variables $X_t := V_t/B(t)$ and $Y_t := \ln(V_t/B(t))$, the FPT of $V_t$ through barrier $B(t)$ will be given by the first time that $X_t=1$ or, equivalently, $Y_t=0$, that is $\tau=\inf\lbrace t\geq 0\vert\sigma W_t=\gamma t-(\mu-\sigma^2/2)t-q\rbrace$, where $q:=\ln(V_0/K)$. So the problem reduces to the well-known FPT of Brownian motion through a linearly moving barrier, see Section 3.5.C in~\cite{karatzas1991brownian}. By It\^o lemma
$$dY_t\;=\;[\mu -\sigma^2/2 - \ln(B(t))'] dt + \sigma dW_t,$$
so the process $Y_t$ is a martingale if $\gamma =\mu -\sigma^2/2$, a supermartingale if $\gamma >\mu -\sigma^2/2$, and a submartingale if $\gamma<\mu -\sigma^2/2$. In particular:
\begin{itemize}
    \item If $\gamma \ge \mu -\sigma^2/2$ the passage event happens in finite time with probability 1, and if the inequality is fulfilled, then in finite mean time, i.e. in a well-defined time scale (in case of equality, it occurs in infinite mean time).
    \item If $\gamma <\mu -\sigma^2/2$, then the passage event occurs with a probability that is positive but strictly less than 1.
\end{itemize}
Respectively in financial terms:
\begin{itemize}
    \item If the debt grows fast enough compared to the value of the company (balanced with its volatility), the company will eventually enter bankruptcy, even if this value grows on average. However, this does not mean that it will enter bankruptcy before the maturity time of the debt. Therefore, the previous discussion tacitly assumes that $B(t)$ is a valid default-inducing barrier for all future times. In other words, there is no realistic maturity time, as debt is always refinanced or reacquired. Or, at least, if such a maturity time exists, it can be considered to happen in the distant future.
    \item If the value of the firm grows faster than the debt (or, more precisely, than the safety covenants prior to debt payment establishing the ability of the firm to pay future debt), the company might never enter into bankruptcy. The same happens if the debt decreases (or safety covenants decrease) fast enough, even if the firm value decreases on average. Note that this actually means that it might either eventually become bankrupt or survive indefinitely. Still, nothing implies that default will not take place before maturity, so we are effectively assuming an infinitely distant maturity date, as in the previous paragraph (or, in practical terms, that maturity will happen long after the other characteristic time scales of interest).
\end{itemize}

Partially inspired by the statistical physics of phase transitions, we might think of this bifurcation as a critical phenomenon~\cite{quantum2021zinn}. In the same line, one might wonder if such a classification can be extended to more generally moving barriers.
Specifically, if the absolute difference $\vert\ln(B(t)/K)-(\mu-\sigma^2/2)t\vert$ grows slowlier than linearly, then such a case does not trivially reduce to the previous ones. In other words, for barrier types of the form $B(t)=K\exp(\gamma t+\tilde{B}(t))$ with $\vert\tilde{B}(t)\vert$ growing sublinearly, the $\gamma t$ term dominates asymptotically whenever $\gamma \neq \mu-\sigma^2/2$, and its qualitative behavior reduces to that of the Black-Cox model. On the other hand, if $\gamma=\mu-\sigma^2/2$, the asymptotic shape of $\tilde{B}(t)$ becomes crucial to determine the almost sure finiteness of the FPT as well as the finiteness of its mean.

Regarding credit risk assessment (for long maturities), those cases for which $\gamma=\mu-\sigma^2/2$ are the most undecidable, as small perturbations on the parameter calibration tilt the prediction on their ability to face debt one way or another. This is why these critical cases would benefit the most from a finer mathematical characterization. In such a case, the $\tilde{B}(t)$ term in the exponential of the barrier can be understood as the introduction in the model of a finer debt structure, safety covenants, or macroeconomic features than those considered in Black-Cox model. This new term improves the decidability on whether the firm will be able to face debt or not, helping to default the firm consequently.

Another justification for this structure is the inclusion of the role of monetary inflation in the Black-Cox model. To include its effect in this mathematical model, we will mirror the developments in the study of diffusion processes in the expanding universe~\cite{aevy2019,eyav2018,veay2018,yae2016}. Although cosmic and monetary inflation are, obviously, different things, we can still try to transfer mathematical knowledge between both. Thus, as we already got some inspiration from the physics of phase transitions, now we will get inspired, in part, by cosmology. Monetary inflation can have a dual effect: if the contraction of demand is smaller than the increase in prices, $V_t$ grows with inflation (usually the case for small steady inflation), while faster inflation can cause a contraction of demand that outweights the increase in revenue due to higher prices, reducing $V_t$. We will say that the first case has a positive \textit{effect of inflation}, while the second case has a negative one. Further models, as the generalized Merton model, include a term in the drift of the stochastic differential equation for $V_t$ that is independent of $V_t$ and represents fixed costs of the firm. These also increase with inflation. For the sake of simplicity, we will not consider the effect of fixed costs. Instead, we just consider that the value process $V_t$ is affected by the inflationary transformation
$$
V_t \longrightarrow A(t) V_t.
$$
We take the monetary scale factor $A(t)$ to be deterministic, continuously differentiable (for the time being, this assumption will be eventually relaxed), positive, and such that $A(0)=1$. Such a function is able to capture both contractive and expansive effects of inflation on the firm asset value $V_t$. If we assume that the debt is unaffected by inflation\footnote{Note that some financial scenarios may require more complex debt models that consider interest rates to be coupled to inflation, in turn affecting the form of the barrier $B(t)$.}, then, by equation (32) in~\cite{yae2016}, the stochastic differential equation for $V_t$ becomes
$$ \nonumber
dV_t = [\mu + \ln(A(t))'] V_t dt + \sigma V_t dW_t
\;\Rightarrow\;
V_t = V_0 \exp[(\mu - \sigma^2/2)t + \sigma W_t] \frac{A(t)}{A(0)}.
$$
The factor $\ln(A(t))'$ is the economic analogue of the Hubble parameter in cosmology; note that we left $A(0)$ (which value is 1) in this formula to highlight the parallelism with the cosmological problem studied in~\cite{yae2016}. If the \textit{effect of inflation} is positive, it is qualitatively equivalent to an increase of the drift of the process $V_t$, so it improves the chances of survival for a barrier with fixed $\gamma$. This mirrors the fact that inflation helps companies overcome their debts, even  achieving survival in cases where, without inflation, a firm would become bankrupt. On the other hand, in a deflationary (more precisely, \textit{negative effect of inflation}) scenario, $A(t)$ acts qualitatively as a decrease in the drift of $V_t$, augmenting the chances of default, even for firms that would avoid bankruptcy without deflation. This analysis deserves further characterization, even more so for the critical case $\gamma=\mu-\sigma^2/2$.

Note that this way of introducing inflation matches perfectly with the previous discussion, as the change
$$
Y_t := \frac{V_t}{B(t)/A(t)}
$$
reduces again the first hitting time problem to a first passage through the origin. Even if we assume the standard Black-Cox structure for the debt (or its safety covenants), i.e. the exponential form of $B(t)$, inflation introduces another time dependence that might be incorporated through the relation $(\mu-\sigma^2/2-\gamma)t-\tilde{B}(t)=\ln(A(t))$\footnote{Note that even the increase in fixed costs due to inflation, given that it is independent of $V_t$, can be included here, given the potential generality of the time dependence of $\tilde{B}(t)$.}; that is, the problem is reduced to study a general barrier form. Note that this mathematical formulation can accommodate different financial phenomena: if safety covenants were introduced through a different form of the barrier, prior to inflation or without considering inflation, the mathematical formalization remains unchanged, only the financial interpretation changes.




In general, the motivation for including diverse barrier forms comes from different sources. Apart from the effect of inflation, one can also consider the effect of extreme weather events. Both suggest the inclusion of oscillating barriers, which we will pay attention to in the following. Their precise shape would need to be deduced from economic, geophysical, and other factors. In the remainder of this work we will skip these to concentrate on constructing a general mathematical framework.

\section{The finer structure of the critical case}

\label{sec:main_FPT}
As described in the previous section, our aim is to generalize the Black-Cox model by considering a general barrier such that $t \mapsto B(t)$ is deterministic, continuous, and positive; in addition, we will assume that $B(0) < V_0$. Morally, we can think of barriers of the form
$$B(t)=K\exp((\mu-\sigma^2/2)t+\tilde{B}(t)).$$
Although our analysis will be general, we pay special attention to $\vert\tilde{B}(t)\vert$ growing sublinearly, which is the critical case. We will recursively employ the changes $X_t := V_t/B(t)$ and $Y_t := \ln(V_t/B(t))$, so the FPT of $V_t$ through $B(t)$ reduces to the first time that $X_t=1$ or $Y_t=0$. Since
$Y_t=q+\sigma W_t-\tilde{B}(t)$ with $q:=\ln(V_0/K)$, then the FPT is equivalently defined as
$$
\tau:=\inf\lbrace t\geq 0\vert V_t=B(t)\rbrace=\inf \left\lbrace t\geq 0 \Bigg\vert W_t=\frac{\tilde{B}(t)-q}{\sigma} \right\rbrace.
$$
Our goal is to identify the barriers that distinguish between a finite and an infinite first moment of the FPT and those that influence whether the process has a positive or null probability of surviving for all times.

We begin with two lemmata that will support our upcoming results.

\begin{lemma}\label{llinf}
Let $f,g:\mathbb{R} \longrightarrow \mathbb{R}$ be such that $\limsup_{t \to \infty} f < \infty$ and $\liminf_{t \to \infty} g < \infty$, then
$$
\liminf_{t \to \infty} (f+g) \le \limsup_{t \to \infty} f + \liminf_{t \to \infty} g.
$$
\end{lemma}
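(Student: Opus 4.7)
The statement is a standard inequality about subadditivity/superadditivity of $\liminf$ and $\limsup$, so I would not expect any deep obstacle — the plan is simply to reduce it to the familiar fact that $\liminf$ of a sequence is bounded above along any particular subsequence, and then exploit the fact that a convergent summand factors cleanly out of a $\liminf$.

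Concretely, set $L := \limsup_{t\to\infty} f(t)$ and $\ell := \liminf_{t\to\infty} g(t)$; by hypothesis both are finite. I would first pick a sequence $t_n \to \infty$ realizing the $\liminf$ of $g$, i.e.\ such that $g(t_n) \to \ell$. Existence of such a sequence follows from the definition of $\liminf$ via $\inf_{t\geq T} g(t)$ (allowing for the separate easy case when the value $\ell$ is attained only as an infimum rather than a minimum, in which one extracts a minimizing sequence within each tail).

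Next, I would use two general facts, which I would either cite or prove in one line each: (i) for any function $h$ and any sequence $t_n\to\infty$, $\liminf_{t\to\infty} h(t) \le \liminf_n h(t_n)$; and symmetrically $\limsup_n h(t_n) \le \limsup_{t\to\infty} h(t)$; (ii) if $b_n \to \ell$ then $\liminf_n (a_n + b_n) = \liminf_n a_n + \ell$, since a convergent summand may be extracted from a $\liminf$ of sequences. Applying (i) to $h = f+g$ and the chosen $t_n$, then (ii) with $a_n = f(t_n)$ and $b_n = g(t_n)$, and finally (i) again to $f$, I obtain
\[
\liminf_{t\to\infty}(f+g) \;\le\; \liminf_n\bigl(f(t_n)+g(t_n)\bigr) \;=\; \liminf_n f(t_n) + \ell \;\le\; \limsup_n f(t_n) + \ell \;\le\; L+\ell,
\]
which is exactly the claim.

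The only delicate points are bookkeeping: ensuring the $t_n$ tend to infinity (not merely lie in a bounded set of times where $g$ is small), and ensuring the finiteness hypotheses on $L$ and $\ell$ are used so that the sum $L+\ell$ is unambiguously defined in $\mathbb{R}$ and no $\infty - \infty$ arises in the chain of inequalities. Both are handled by the assumptions $\limsup f < \infty$ and $\liminf g < \infty$ in the statement. I would present the argument in four lines, essentially as above.
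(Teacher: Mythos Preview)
Your sequence-based argument is sound and reaches the conclusion, but it differs from the paper's approach, and there is one oversight worth flagging.

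\textbf{Comparison with the paper.} The paper argues algebraically rather than via subsequences: from the standard inequality $\liminf_{t\to\infty}(h-f) \ge \liminf_{t\to\infty} h - \limsup_{t\to\infty} f$ (valid when $\limsup f$ is finite), one rearranges to $\liminf h \le \liminf(h-f) + \limsup f$ and substitutes $h=f+g$. This avoids choosing any particular sequence and is marginally shorter. Your route, by contrast, realizes $\liminf g$ along an explicit sequence $t_n\to\infty$ and then pulls the convergent summand $g(t_n)$ out of the $\liminf$. Both are elementary; the paper's version is a one-line manipulation of known $\liminf$/$\limsup$ inequalities, while yours is more hands-on and perhaps more transparent to a reader unfamiliar with those inequalities.

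\textbf{The oversight.} You write ``by hypothesis both are finite'', but the hypotheses say only $L<\infty$ and $\ell<\infty$; either could equal $-\infty$. Your fact (ii), as stated, assumes $\ell\in\mathbb{R}$, and the equality $\liminf_n(f(t_n)+g(t_n)) = \liminf_n f(t_n) + \ell$ needs separate justification when $\ell=-\infty$. The paper handles the case $L=-\infty$ explicitly for exactly this reason. The patch is easy in your framework too: if $\ell=-\infty$, then $g(t_n)\to -\infty$ while $f(t_n)$ is eventually bounded above (since $L<\infty$), so $f(t_n)+g(t_n)\to -\infty$ and the inequality holds trivially; the case $L=-\infty$ is similar. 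You should state this rather than assert finiteness.
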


\begin{proof}
We start with the case in which $\limsup_{t \to \infty} f > -\infty$.
Then, for any $h:\mathbb{R} \longrightarrow \mathbb{R}$, we have
\begin{eqnarray}\nonumber
\liminf_{t \to \infty} (h-f) &\ge& \liminf_{t \to \infty} h + \liminf_{t \to \infty} (-f) \\ \nonumber
&=& \liminf_{t \to \infty} h - \limsup_{t \to \infty} f,
\end{eqnarray}
thus
$$\nonumber
\liminf_{t \to \infty} h \le \liminf_{t \to \infty} (h-f) + \limsup_{t \to \infty} f;
$$
now take $h=f+g$ to conclude.

If $\limsup_{t \to \infty} f = -\infty$ then $\lim_{t \to \infty} f = -\infty$, so we have
$$
\limsup_{t \to \infty} f + \liminf_{t \to \infty} g =-\infty;
$$
but also
$$\nonumber
\liminf_{t \to \infty} (f+g) = \lim_{t \to \infty} f + \liminf_{t \to \infty} g = -\infty,
$$
so the statement follows.
\end{proof}

\begin{lemma}
\label{lemma:rem2barriers}
Let $\tau_i$ be the FPT of an almost surely continuous stochastic process $C_t$ through a barrier given by the continuous function $B_i(t)$, $i=1,2$, i.e. 
$$
\tau_i = \inf \{t \ge 0 : C_t = B_i(t)\}.
$$
If $B_1(t) \geq B_2(t)\;\forall \, t \geq 0$ and $C_0 > B_1(0)$, then $\tau_1 \leq \tau_2$ almost surely and $\mathbb{E}(\tau_1) \leq \mathbb{E}(\tau_2)$ (be they finite or not). 
\end{lemma}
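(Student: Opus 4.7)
The plan is to reduce everything to a pathwise argument: once I show $\tau_1 \le \tau_2$ holds almost surely, the expectation inequality is immediate by monotonicity of the Lebesgue integral applied to non-negative random variables, and this works whether or not either mean is finite (infinite means are handled by the usual convention). So the real content of the lemma is the almost-sure pathwise inequality of the hitting times.

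For the pathwise step, I would restrict to the full-measure event on which $t \mapsto C_t$ is continuous, and argue by the intermediate value theorem applied to the continuous auxiliary function $g(t) := C_t - B_1(t)$. The case $\tau_2 = \infty$ is trivial, so assume $\tau_2 < \infty$. Because $t \mapsto C_t - B_2(t)$ is continuous, the hitting set $\{t \ge 0 : C_t = B_2(t)\}$ is closed, so its infimum $\tau_2$ is attained, giving $C_{\tau_2} = B_2(\tau_2) \le B_1(\tau_2)$ and hence $g(\tau_2) \le 0$. On the other hand $g(0) = C_0 - B_1(0) > 0$ by hypothesis. The IVT on the compact interval $[0,\tau_2]$ then yields some $t^\star \in (0,\tau_2]$ with $C_{t^\star} = B_1(t^\star)$, whence $\tau_1 \le t^\star \le \tau_2$, as desired.

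I do not expect a genuine obstacle: the argument is essentially topological. The most delicate ingredient is verifying that the hitting infimum $\tau_2$ is attained whenever it is finite, but this is precisely where the continuity assumptions on both $C$ and $B_2$ are used, and they are in the hypotheses. Once that is in place, the passage to expectations is standard and there are no integrability subtleties to worry about.
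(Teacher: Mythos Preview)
Your proposal is correct and follows essentially the same approach as the paper, which simply states that the result is a direct consequence of the continuity of $B_i(t)$ and the almost sure continuity of $C_t$ via the intermediate value theorem. You have fleshed out precisely those details (attainment of the infimum by closedness of the hitting set, the IVT on $[0,\tau_2]$, and monotonicity of expectation), but the underlying idea is identical.
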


\begin{proof}
The statement is, clearly, a direct consequence of the continuity of $B_i(t)$, $i=1,2$, and the almost sure continuity of $C_t$ through the intermediate value theorem.
\end{proof}

The next result is the main theorem of this section. It establishes conditions for the almost sure finiteness of the FTP and the finiteness of its mean, providing a finer classification than that of the previous section. Although it will be subsequently refined, it might already be sufficiently detailed for some purposes.

\begin{theorem}\label{mainfpt}
Consider the stochastic differential equation
$$
dV_t=\mu V_t dt + \sigma V_t dW_t, \qquad \left. V_t \right|_{t=0}= V_0,
$$
with $\mu \in \mathbb{R}$ and $\sigma, V_0>0$. Let the function
\begin{eqnarray}\nonumber
B:[0,\infty) &\longrightarrow& (0,\infty) \\ \nonumber
t &\longmapsto& B(t),
\end{eqnarray}
be such that $B(t) \in \mathcal{C}([0,\infty))$, $K \equiv B(0) < V_0$, and $B(t)>0$ for all $t \ge 0$. Then, if
$$
\tau := \inf \{t \ge 0 : V_t = B(t)\},
$$
it holds that:
\begin{enumerate}
\item[(a)] Whenever either
$$
\liminf_{t \to \infty} B(t) \Big/ \left\{ K \exp \left[(\mu-\sigma^2/2)t - \sigma \sqrt{2t \ln(\ln(t))}\right] \right\} >1,
$$
or
$$
\limsup_{t \to \infty} B(t) \Big/ \left\{ K \exp \left[(\mu-\sigma^2/2)t + \sigma \sqrt{2t \ln(\ln(t))}\right] \right\} >1,
$$
then $\tau < \infty$ almost surely.
\item[(b)] Whenever
$$
B(t) \ge K \exp \left[(\mu-\sigma^2/2)t + \alpha \sqrt{t} \, \right]\;\forall \, t > 0,
$$
for some $\alpha > \sigma$, then $\mathbb{E}(\tau)<\infty$; while if
$$
B(t) \le K \exp \left[(\mu-\sigma^2/2)t + \sigma \sqrt{t} \, \right]\;\forall \, t > 0, 
$$
then $\mathbb{E}(\tau)=\infty$.
\end{enumerate}
\end{theorem}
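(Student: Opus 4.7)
My plan relies on the reduction recorded at the beginning of Section~\ref{sec:main_FPT}: setting $\tilde{B}(t) := \ln(B(t)/K) - (\mu-\sigma^2/2)t$, $q := \ln(V_0/K) > 0$, and $g(t) := (\tilde{B}(t) - q)/\sigma$, the FPT $\tau$ equals the first time the standard Brownian motion $W$ meets the continuous barrier $g$, started at $W_0 = 0 > g(0) = -q/\sigma$. The two hypotheses of part (a) unpack into the envelope conditions $\tilde{B}(t) \ge \ln(1+\eta) - \sigma\sqrt{2t\ln\ln t}$ for all sufficiently large $t$, and $\tilde{B}(t_n) \ge \ln(1+\eta) + \sigma\sqrt{2t_n\ln\ln t_n}$ along some sequence $t_n \to \infty$, for some $\eta>0$.

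For the first half of part (a), I would appeal to the Erd\H{o}s--Feller--Kolmogorov integral test, a classical sharpening of the law of the iterated logarithm, whose consequence here is that for every constant $M \in \mathbb{R}$ the event $\{W_t < M - \sqrt{2t\ln\ln t}\}$ occurs at arbitrarily large $t$ almost surely. Applied with $M = (\ln(1+\eta) - q)/\sigma$, this yields an almost surely finite random time $t^{\ast}$ satisfying $W_{t^{\ast}} < M - \sqrt{2t^{\ast}\ln\ln t^{\ast}} \le g(t^{\ast})$; continuity of $W - g$ and the initial gap $q/\sigma > 0$ then force $\tau \le t^{\ast} < \infty$ a.s. For the second half, each $t_n$ satisfies $g(t_n) \ge \sqrt{2t_n\ln\ln t_n} + C$ for an explicit constant $C$, and a standard Gaussian tail estimate gives $\mathbb{P}(W_{t_n} < g(t_n)) \to 1$; the reverse Fatou lemma then forces $\{W_{t_n} < g(t_n)\}$ to occur infinitely often almost surely, which again yields $\tau < \infty$ a.s.\ by continuity.

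For part (b), I would sandwich $B$ between the explicit barriers $B_\kappa(t) := K\exp[(\mu-\sigma^2/2)t + \kappa\sqrt{t}\,]$ via Lemma~\ref{lemma:rem2barriers}, reducing both halves to the single dichotomy $\mathbb{E}(\tau_{B_\kappa}) < \infty \iff \kappa > \sigma$. To access this, the time change $s = \tfrac{1}{2}\ln t$ is convenient: $Y_s := e^{-s}\, W_{e^{2s}}$ is a stationary Ornstein--Uhlenbeck process, and the equation $W_t = (\kappa/\sigma)\sqrt{t} - q/\sigma$ becomes $Y_s = \kappa/\sigma - (q/\sigma)\, e^{-s}$, a barrier receding to the level $\kappa/\sigma$. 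Writing $\sigma^{\ast}$ for the FPT of $Y$ through this barrier, one has $\tau_{B_\kappa} = e^{2\sigma^{\ast}}$, so $\mathbb{E}(\tau_{B_\kappa}) = \mathbb{E}(e^{2\sigma^{\ast}})$; this is finite precisely when the tail of $\sigma^{\ast}$ decays faster than $e^{-2s}$, i.e.\ when the first eigenvalue $\lambda_1$ of the OU generator on $(\kappa/\sigma, \infty)$ with Dirichlet boundary at $\kappa/\sigma$ exceeds $2$. The explicit eigenfunction $u(y) = y^2 - 1$ has eigenvalue $2$ and vanishes at $y = 1$, pinning the critical ratio at $\kappa/\sigma = 1$; monotonicity of Dirichlet eigenvalues in the domain then delivers $\lambda_1 > 2 \iff \kappa > \sigma$.

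The main technical difficulty I anticipate is exactly this identification of the critical threshold in part (b). The Lemma~\ref{lemma:rem2barriers} comparison is routine, but the equivalence $\mathbb{E}(\tau_{B_\kappa}) < \infty \iff \kappa > \sigma$ cannot be extracted from first-moment algebra alone: Wald-type identities applied to $W_\tau^2 - \tau$ are consistent with finite mean on both sides of $\kappa = \sigma$ and therefore do not detect the transition. Genuine spectral input is required, whether through the Ornstein--Uhlenbeck eigenvalue argument sketched above or an equivalent closed-form parabolic cylinder function representation of the $\sqrt{t}$-barrier FPT density, and I expect this calculation to be carried out in Sections~\ref{sec:critical_prob} and~\ref{sec:mfpts}.
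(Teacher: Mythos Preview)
For part (a) your route is correct and differs from the paper's. The paper does not invoke the Erd\H{o}s--Feller--Kolmogorov integral test or a Fatou argument; it works directly from the Khintchine law of the iterated logarithm together with the elementary inequality $\liminf(f+g)\le\limsup f+\liminf g$ of Lemma~\ref{llinf}, applied to the splitting $\ln V_t-\ln B(t)=[\ln V_t-(\mu-\sigma^2/2)t]+[(\mu-\sigma^2/2)t-\ln B(t)]$. Your EFK argument for the first alternative is heavier machinery than the paper deploys; your Gaussian-tail/reverse-Fatou argument for the second alternative is, if anything, lighter.

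The substantive issue is part (b). Your final paragraph asserts that Wald-type identities on $W_\tau^2-\tau$ ``are consistent with finite mean on both sides of $\kappa=\sigma$ and therefore do not detect the transition,'' so that ``genuine spectral input is required.'' This is a misconception: the paper extracts the full dichotomy from the optional stopping theorem and Lemma~\ref{lemma:rem2barriers} alone, entirely within the proof of Theorem~\ref{mainfpt}, with no Ornstein--Uhlenbeck time change and no eigenvalue computation. For the explicit barrier $B_\alpha(t)=K\exp[(\mu-\sigma^2/2)t+\alpha\sqrt t\,]$ one has $\alpha\sqrt{t\wedge\tau}\le q+\sigma W_{t\wedge\tau}$; squaring and applying optional stopping to $W_{t\wedge\tau}$ and $W_{t\wedge\tau}^2-(t\wedge\tau)$ yields
\[
\alpha^2\,\mathbb{E}(t\wedge\tau)\;\le\;q^2+\sigma^2\,\mathbb{E}(t\wedge\tau),
\]
hence for $\alpha>\sigma$ the uniform bound $\mathbb{E}(t\wedge\tau)\le q^2/(\alpha^2-\sigma^2)$ and $\mathbb{E}(\tau)<\infty$ by monotone convergence. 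A further pass with Doob's $L^2$ martingale convergence theorem upgrades this to the exact formula $\mathbb{E}(\tau_\alpha)=q^2/(\alpha^2-\sigma^2)$. The critical case is then pure comparison: $B_\sigma\le B_\alpha$ forces $\tau_\sigma\ge\tau_\alpha$ almost surely, so $\mathbb{E}(\tau_\sigma)\ge\sup_{\alpha>\sigma}q^2/(\alpha^2-\sigma^2)=\infty$. Your OU spectral route would also lead to the answer (modulo the residual time-dependence of the transformed barrier $\kappa/\sigma-(q/\sigma)e^{-s}$, which you leave unaddressed), but it is considerable overkill and does not deliver the explicit formula recorded in Corollary~\ref{cor:mainfpt} that drives the later bounds of Section~\ref{sec:mfpts}.
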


\begin{proof}
The proof of this theorem is broken down into two parts. Firstly, we address the almost sure finiteness of $\tau$ via the law of the iterated logarithm~\cite{karatzas1991brownian,khintchine1924satz,brownian2010morters,brownian2014sp}. Secondly, the finiteness of the mean FPT is approached by means of the optional stopping theorem~\cite{handbook2012borodin,karatzas1991brownian,brownian2010morters,brownian2014sp}. Before starting, we note the obvious fact that $q:=\ln(V_0/K)>0$ under the present assumptions.

Let us start proving (a).
It is a classical result that the stochastic differential equation in the statement has a unique solution that is both strong and global, and given by the geometric Brownian motion
$$
V_t=V_0 \exp \left[ (\mu - \sigma^2/2)t + \sigma W_t \right],
$$
see for instance~\cite{oksendal2013}. The almost sure positivity and continuity of this stochastic process are direct consequences of this formula and the properties of Brownian motion. Therefore, the FPT can be written as
\begin{eqnarray}\nonumber
\tau &=& \inf \{t \ge 0 : \ln(V_t) = \ln(B(t))\}\\ \nonumber
&=& \inf \{t \ge 0 : \ln(V_0) + (\mu - \sigma^2/2)t + \sigma W_t - \ln(B(t)) =0\}.
\end{eqnarray}
Now, with probability one it holds that
\begin{eqnarray}\nonumber
\liminf_{t \to \infty} \frac{\ln(V_t)-(\mu - \sigma^2/2)t}{\sqrt{2t \ln(\ln(t))}} &=& \liminf_{t \to \infty} \frac{\ln(V_0) + \sigma W_t}{\sqrt{2t \ln(\ln(t))}}=-\sigma,\\ \nonumber
\limsup_{t \to \infty} \frac{\ln(V_t)-(\mu - \sigma^2/2)t}{\sqrt{2t \ln(\ln(t))}} &=& \limsup_{t \to \infty} \frac{\ln(V_0) + \sigma W_t}{\sqrt{2t \ln(\ln(t))}}=\sigma,\\ \nonumber
\liminf_{t \to \infty} \frac{\ln(B(t))-(\mu - \sigma^2/2)t}{\sqrt{2t \ln(\ln(t))}} &>& -\sigma,
\\ \nonumber
\limsup_{t \to \infty} \frac{(\mu - \sigma^2/2)t-\ln(B(t))}{\sqrt{2t \ln(\ln(t))}} &<& \sigma,
\end{eqnarray}
where we have used the law of the iterated logarithm and the assumption
$$
\liminf_{t \to \infty} B(t) \Big/ \left\{ K \exp \left[(\mu-\sigma^2/2)t - \sigma \sqrt{2t \ln(\ln(t))}\right] \right\} >1
$$
respectively.
Therefore, by Lemma~\ref{llinf} and almost surely, we find that
\begin{eqnarray}\nonumber
&& \liminf_{t \to \infty} \frac{\ln(V_t)-\ln(B(t))}{\sqrt{2t \ln(\ln(t))}} \le \\ \nonumber
&& \liminf_{t \to \infty} \frac{\ln(V_t)-(\mu - \sigma^2/2)t}{\sqrt{2t \ln(\ln(t))}} + \limsup_{t \to \infty} \frac{(\mu - \sigma^2/2)t-\ln(B(t))}{\sqrt{2t \ln(\ln(t))}}<0.
\end{eqnarray}
Consequently, $\liminf_{t \to \infty} \ln(V_t)-\ln(B(t))=-\infty$, so, by continuity of $B(t)$ and the intermediate value theorem, the result follows.

Instead of the previous assumption, assume now that the hypothesis
$$
\limsup_{t \to \infty} B(t) \Big/ \left\{ K \exp \left[(\mu-\sigma^2/2)t + \sigma \sqrt{2t \ln(\ln(t))}\right] \right\} >1
$$
holds. In such a case we have the limiting behavior:
\begin{eqnarray}\nonumber
\limsup_{t \to \infty} \frac{\ln(B(t))-(\mu - \sigma^2/2)t}{\sqrt{2t \ln(\ln(t))}} &>& \sigma,
\\ \nonumber
\liminf_{t \to \infty} \frac{(\mu - \sigma^2/2)t-\ln(B(t))}{\sqrt{2t \ln(\ln(t))}} &<& -\sigma.
\end{eqnarray}
Then, again by Lemma~\ref{llinf} and almost surely, we get that
\begin{eqnarray}\nonumber
&& \liminf_{t \to \infty} \frac{\ln(V_t)-\ln(B(t))}{\sqrt{2t \ln(\ln(t))}} \le \\ \nonumber
&& \limsup_{t \to \infty} \frac{\ln(V_t)-(\mu - \sigma^2/2)t}{\sqrt{2t \ln(\ln(t))}} + \liminf_{t \to \infty} \frac{(\mu - \sigma^2/2)t-\ln(B(t))}{\sqrt{2t \ln(\ln(t))}}<0.
\end{eqnarray}
The consequence of this inequality, as before, is that $\liminf_{t \to \infty} \ln(V_t)-\ln(B(t))=-\infty$; thus, once more by continuity of $B(t)$ and the intermediate value theorem, the statement follows. This completes the proof of part (a) of this theorem.

To prove part (b), by Lemma~\ref{lemma:rem2barriers}, it is enough to compute the mean FPT of $V_t$ through
$$
B(t) = K \exp \left[(\mu-\sigma^2/2)t + \alpha \sqrt{t} \, \right]
$$
for $\alpha \ge \sigma$. In this case, the FPT reads
$$
\tau=\inf \left\{t \ge 0 : \ln(V_0/K) + \sigma W_t = \alpha \sqrt{t} \, \right\},
$$
which is, by part (a) of this proof, finite almost surely even for any $\alpha \in \mathbb{R}$.
This translates to
\begin{eqnarray}\nonumber
\alpha \sqrt{\tau} &=& \ln(V_0/K) + \sigma W_\tau, \\ \nonumber
\alpha \sqrt{t \wedge \tau} &\le& \ln(V_0/K) + \sigma W_{t \wedge \tau},
\end{eqnarray}
so, in consequence
$$
\alpha^2 \, t \wedge \tau \le (\ln(V_0/K))^2 + \sigma^2 W_{t \wedge \tau}^2 + 2 \ln(V_0/K) \, \sigma W_{t \wedge \tau}.
$$
Now we obtain
\begin{eqnarray}\nonumber
\alpha^2 \, \mathbb{E} (t \wedge \tau) &\le& (\ln(V_0/K))^2 + \sigma^2 \, \mathbb{E}(W_{t \wedge \tau}^2) + 2 \ln(V_0/K) \, \sigma \, \mathbb{E} (W_{t \wedge \tau}) \\ \nonumber
&=& (\ln(V_0/K))^2 + \sigma^2 \, \mathbb{E}(t \wedge \tau),
\end{eqnarray}
where we have used, in the first step, the linearity of the expectation and, in the second, the optional stopping theorem along with the fact that $W_t$ and $W_t^2 -t$ are martingales. This, for $\alpha > \sigma$, yields
$$
\mathbb{E} (t \wedge \tau) \le \frac{(\ln(V_0/K))^2}{\alpha^2 - \sigma^2}.
$$
This bound, in turn, implies the summability of $\tau$:
$$
\mathbb{E} (\tau) = \mathbb{E} \left( \lim_{t \to \infty} t \wedge \tau \right) = \lim_{t \to \infty} \,
\mathbb{E} (t \wedge \tau) \le \frac{(\ln(V_0/K))^2}{\alpha^2 - \sigma^2},
$$
where we have used the monotone convergence theorem in the second equality.
Additionally, again by the optional stopping theorem:
$$
\sup_{t \ge 0} \mathbb{E}(W_{t \wedge \tau}^2) = \sup_{t \ge 0} \mathbb{E}(t \wedge \tau) \le \frac{(\ln(V_0/K))^2}{\alpha^2 - \sigma^2} < \infty;
$$
then, by the Doob martingale convergence theorem~\cite{revuz2013continuous} (and the martingality of $W_t$), we conclude that $W_{t \wedge \tau} \to W_\tau$ as $t \to \infty$ both almost surely (by the almost sure finiteness of $\tau$) and in $L^2(\Omega)$, and hence in $L^1(\Omega)$. Therefore
$$
\mathbb{E}(W_{\tau}^2) = \lim_{t \to \infty} \mathbb{E}(W_{t \wedge \tau}^2) = \mathbb{E}(\tau), \qquad \mathbb{E}(W_{\tau}) = \lim_{t \to \infty} \mathbb{E}(W_{t \wedge \tau}) = 0.
$$
We also have
$$
\alpha^2 \, \tau = (\ln(V_0/K))^2 + \sigma^2 W_{\tau}^2 + 2 \ln(V_0/K) \, \sigma W_{\tau},
$$
so by the linearity of the expectation
$$
\alpha^2 \, \mathbb{E}(\tau) = (\ln(V_0/K))^2 + \sigma^2 \, \mathbb{E}(\tau). 
$$
For any $\alpha > \sigma$, we conclude that
$$
\mathbb{E}(\tau) = \frac{(\ln(V_0/K))^2}{\alpha^2 - \sigma^2} < \infty.
$$
Finally, for $\alpha = \sigma$, by Lemma~\ref{lemma:rem2barriers} we deduce
$$
\mathbb{E}(\tau) \ge \sup_{\alpha > \sigma} \frac{(\ln(V_0/K))^2}{\alpha^2 - \sigma^2} = \infty,
$$
and the statement follows.
\end{proof}

\begin{remark}
The assumption $B(t)>0$ for all $t \ge 0$ comes from the fact that the stochastic process $\{V_t; t \ge 0\}$ is positive almost surely; so it is possible to let $B(t)$ to take null or negative values (at least mathematically speaking), but of limited interest. 
\end{remark}

This theorem provides a closed formula for the mean FPT for a barrier component $\tilde{B}(t)\propto \sqrt{t}$. That, in turn, implies an upper bound for the mean FPT for barriers lower bounded by such a function. The precise statement is as follows.

\begin{corollary}
\label{cor:mainfpt}
The mean value of the stopping time
$$
\tau = \inf \{t \ge 0 : V_t = B(t)\}
$$
through the barrier
$$
B(t) = K \exp \left[(\mu-\sigma^2/2)t + \alpha \sqrt{t} \, \right],
$$
for any $\alpha > \sigma$, is given by the explicit formula:
$$
\mathbb{E}(\tau) = \frac{(\ln(V_0/K))^2}{\alpha^2 - \sigma^2}.
$$
Moreover, the mean FPT for a barrier that fulfills
$$
B(t) \ge K \exp \left[(\mu-\sigma^2/2)t + \alpha \sqrt{t} \, \right],
$$
for all $t > 0$ and some $\alpha > \sigma$, obeys the explicit bound
$$
\mathbb{E}(\tau) \le \frac{(\ln(V_0/K))^2}{\alpha^2 - \sigma^2}.
$$
\end{corollary}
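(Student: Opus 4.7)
The plan is to observe that essentially all of the work has already been carried out inside the proof of Theorem~\ref{mainfpt}(b), and then to assemble the two parts of the corollary by (i) isolating an exact identity from the optional-stopping computation there and (ii) applying the comparison Lemma~\ref{lemma:rem2barriers}.

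For the first assertion, I would prove the equality $\mathbb{E}(\tau) = (\ln(V_0/K))^2/(\alpha^2-\sigma^2)$ for the specific barrier $B(t)=K\exp[(\mu-\sigma^2/2)t+\alpha\sqrt{t}\,]$ by repeating the chain of identities (not only inequalities) that already appears in the proof of Theorem~\ref{mainfpt}(b): reduce to the Brownian FPT $\tau=\inf\{t\ge 0:\sigma W_t+\ln(V_0/K)=\alpha\sqrt{t}\,\}$, which is almost surely finite by part (a) of the theorem; apply the optional stopping theorem at $t\wedge\tau$ to the martingales $W_t$ and $W_t^2-t$; pass to the limit $t\to\infty$ using the monotone convergence theorem (to control $\mathbb{E}(t\wedge\tau)$) and the Doob $L^2$ martingale convergence theorem (to obtain $\mathbb{E}(W_\tau)=0$ and $\mathbb{E}(W_\tau^2)=\mathbb{E}(\tau)$); and finally square the stopped identity $\alpha\sqrt{\tau}=\ln(V_0/K)+\sigma W_\tau$ and take expectations to get the equality. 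Since the proof of Theorem~\ref{mainfpt}(b) literally performs these steps, the content here is mostly bookkeeping.

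For the second assertion, I would introduce the auxiliary barrier $B_\alpha(t):=K\exp[(\mu-\sigma^2/2)t+\alpha\sqrt{t}\,]$ and directly invoke Lemma~\ref{lemma:rem2barriers} with $C_t=V_t$, $B_1=B$, $B_2=B_\alpha$. The hypothesis $B(t)\ge B_\alpha(t)$ for all $t>0$, together with $B(0)=K=B_\alpha(0)$ and $V_0>K$, fits the lemma's assumptions, so the associated FPTs satisfy $\tau\le\tau_\alpha$ almost surely; hence $\mathbb{E}(\tau)\le \mathbb{E}(\tau_\alpha)$, and the first part supplies the closed form of the right-hand side.

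There is no substantive obstacle. The only two points that deserve a brief mention are: (i) that the optional-stopping computation embedded in the proof of Theorem~\ref{mainfpt}(b) actually produces an \emph{equality}, not merely the upper bound used there as a stepping stone, which is what allows the explicit formula to be extracted; and (ii) that the strict inequality $V_0>K$ and the coincidence $B(0)=B_\alpha(0)$ at the boundary point $t=0$ (not covered by the $t>0$ hypothesis of the corollary) are both needed to legitimately apply Lemma~\ref{lemma:rem2barriers}.
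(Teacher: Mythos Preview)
Your proposal is correct and matches the paper's own proof, which simply states that the corollary is a direct consequence of the proof of Theorem~\ref{mainfpt} together with Lemma~\ref{lemma:rem2barriers}. You have merely unpacked those two ingredients explicitly: the exact formula is already derived verbatim inside the proof of Theorem~\ref{mainfpt}(b), and the comparison bound is precisely Lemma~\ref{lemma:rem2barriers} applied with $B_1=B$ and $B_2=B_\alpha$.
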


\begin{proof}
The statement is a direct consequence of the proof of Theorem~\ref{mainfpt} and the application of Lemma~\ref{lemma:rem2barriers}.
\end{proof}
The change $(V_t,B(t))\rightarrow (W_t,\tilde{B}(t))$, introduced in the previous section, has been used tacitly in the proof of Theorem~\ref{mainfpt}. With this language, the FPT $\tau:=\inf\lbrace t\geq 0\vert \sigma W_t=\tilde{B}(t)-q\rbrace$ is almost surely finite subject to the positivity of the long-time behavior of the infimum/supremum of  $\tilde{B}(t)\pm\sigma\sqrt{2t\ln(\ln(t))}$ respectively. Similarly, its first moment is finite if $\tilde{B}(t) \geq \alpha\sqrt{t}$ for some $\alpha > \sigma$ and all $t\geq0$. This notation will be extensively used in the next section, which is devoted to a more detailed analysis of the properties of the survival probability.

\begin{remark}
Some readers might wonder why we have not used the changes for $X_t$ and $Y_t$ to work directly with either of the stochastic differential equations
\begin{eqnarray} \nonumber
dX_t &=& [\mu - \ln(B(t))'] X_t dt + \sigma X_t dW_t, \\ \nonumber
dY_t &=& [\mu -\sigma^2/2 - \ln(B(t))'] dt + \sigma dW_t.
\end{eqnarray}
If we think of $(\cdot)'$ as being a classical derivative, these equations
only make sense if the barrier function $B(t) \in \mathcal{C}([0,\infty)) \cap \mathcal{C}^1((0,\infty))$. Alternatively, if we consider it to be a Radon-Nikodym derivative, then it would be enough to have $B(t) \in \mathcal{AC}([0,\infty))$ to obtain {\it bona fide} It\^o stochastic differential equations. However, we have assumed the lesser regularity $B(t) \in \mathcal{C}([0,\infty))$. This permits to include functions  as regular as Brownian paths, or even rougher, contrary to the two previous assumptions. In particular, we can select barrier functions that are solutions to rough~\cite{lyons1998differential} or stochastic differential equations~\cite{oksendal2013}, provided they are independent of the filtration generated by $W_t$. This highlights the advantage of the present approach.
\end{remark}

\section{The even finer structure of criticality}\label{sec:critical_prob}

The aim of this section is to further clarify the properties of $\mathbb{P}(\tau<\infty)$. In particular, note that Theorem~\ref{mainfpt} does not cover those cases in which the barriers fulfill

$$
\liminf_{t \to \infty}\tilde{B}(t)- \sigma \sqrt{2t \ln(\ln(t))} \le 0\;\text{ or }\;
\limsup_{t \to \infty} \tilde{B}(t)+\sigma \sqrt{2t \ln(\ln(t))} \le 0.
$$
Along with these specific asymptotic issues, we will also state results that concern general continuous barriers. We start by proving the fact that survival cannot happen almost surely.

\begin{proposition}\label{prop:tlti}
Let $V_t$ be the unique solution to the SDE
$$
dV_t=\mu V_t dt + \sigma V_t dW_t, \qquad \left. V_t \right|_{t=0}= V_0,
$$
with $\mu \in \mathbb{R}$ and $\sigma, V_0>0$, and the barrier function
\begin{eqnarray}\nonumber
B:[0,\infty) &\longrightarrow& (0,\infty) \\ \nonumber
t &\longmapsto& B(t),
\end{eqnarray}
be such that $B(t) \in \mathcal{C}([0,\infty))$ and $K \equiv B(0)<V_0$, but arbitrary otherwise. Then, for the FPT
$$
\tau := \inf \{t \ge 0 : V_t = B(t)\},
$$
it holds that $\mathbb{P}(\tau < \infty) >0$.
\end{proposition}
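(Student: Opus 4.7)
The plan is to exploit the full support of the log-normal distribution of $V_t$ together with the intermediate value theorem, bypassing any fine asymptotic analysis of the barrier. The observation is that the statement only requires \emph{some} positive probability of hitting; we do not need to quantify it or relate it to the asymptotic shape of $B(t)$.

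First I would fix an arbitrary time, say $T=1$. Since $B\in\mathcal{C}([0,\infty))$ takes strictly positive values, $B(T)>0$ is a well-defined finite number. On the other hand, by the closed-form solution $V_T=V_0\exp[(\mu-\sigma^2/2)T+\sigma W_T]$, the random variable $V_T$ has a log-normal distribution, whose support is all of $(0,\infty)$. Consequently,
$$
\mathbb{P}(V_T<B(T))>0,
$$
which can also be made explicit via the cumulative distribution function of a standard Gaussian applied to $\sigma^{-1}[\ln(B(T)/V_0)-(\mu-\sigma^2/2)T]$.

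Next I would invoke path continuity. The process $t\mapsto V_t$ is almost surely continuous and the barrier $t\mapsto B(t)$ is continuous by assumption, so the difference $t\mapsto V_t-B(t)$ is almost surely continuous on $[0,T]$. At $t=0$ this difference equals $V_0-K>0$, while on the event $\{V_T<B(T)\}$ it is strictly negative at $t=T$. The intermediate value theorem then supplies some $t^{*}\in(0,T)$ with $V_{t^{*}}=B(t^{*})$, and therefore $\tau\le t^{*}<\infty$ on this event. Combining these two ingredients,
$$
\mathbb{P}(\tau<\infty)\ \ge\ \mathbb{P}(V_T<B(T))\ >\ 0,
$$
which is the claim.

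There is no real obstacle here; the argument is elementary once one notices that global asymptotic control over $B(t)$ is unnecessary and that the log-normal marginal at any single finite time already has full support on $(0,\infty)$. The only mild subtlety is being careful that the IVT conclusion is applied pathwise on the almost sure set where $V_{\bullet}$ is continuous, which is then intersected with $\{V_T<B(T)\}$ to keep positive probability.
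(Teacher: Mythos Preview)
Your argument is correct and in fact more direct than the paper's. Both proofs reduce to the same final step: showing that with positive probability the continuous process $V_t-B(t)$ changes sign on some finite interval $[0,T]$, and then invoking the intermediate value theorem. The difference lies in how the sign change at time $T$ is obtained. You simply observe that $V_T$ is log-normal and hence has full support on $(0,\infty)$, so $\mathbb{P}(V_T<B(T))>0$ immediately. The paper instead invokes the support theorem of Brownian motion, constructing a linear path $f$ that dips below $\min_{0\le s\le T}\hat B(s)$ and arguing that $W_t$ stays within an $\epsilon$-tube around $f$ with positive probability; this forces $W_T$ below the minimum of the transformed barrier, which is strictly more than needed for the intermediate value theorem to apply.

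Your route is more economical for this proposition and yields the same quantitative byproduct (Remark~\ref{rem:nfpt}, that the conclusion holds for the FPT restricted to any finite horizon $[0,T]$). What the paper's approach buys is reuse: the support theorem is the essential tool in the subsequent Theorem~\ref{theor:tlti}, where one must show that Brownian paths can stay \emph{above} the barrier on a whole interval $[0,t_2]$, not merely end up on a given side at a single time. There a one-point distributional argument would not suffice, so the paper introduces the support theorem here partly to set up that later use.
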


\begin{proof}
As already noted in the proof of Theorem~\ref{mainfpt}, by the monotony of the logarithm, the FPT can be written as
\begin{eqnarray}\nonumber
\tau &=& \inf \{t \ge 0 : \ln(V_t) = \ln(B(t))\}\\ \nonumber
&=& \inf \{t \ge 0 : \ln(V_0) + (\mu - \sigma^2/2)t + \sigma W_t = \ln(B(t)) \} \\ \nonumber
&=& \inf \{t \ge 0 : W_t = \ln(B(t))/\sigma - \left[(\mu - \sigma^2/2)t + \ln(V_0) \right]/\sigma \}.
\end{eqnarray}
By the continuity of $B(t)$, defining $q:=\ln(V_0/K)$, the function
$$
\hat{B}(t):=(\tilde{B}(t) -q)/\sigma= \ln(B(t))/\sigma - \left[(\mu - \sigma^2/2)t + \ln(V_0) \right]/\sigma
$$
is continuous too and such that $\hat{B}(0) = \ln(K/V_0)/\sigma <0$. The support theorem of Brownian motion~\cite{brownian2010morters} guarantees that
\begin{equation}\label{eq:stbm}
\mathbb{P} \left( \left\{ \sup_{0 \le t \le T} |W_t - f(t)| < \epsilon \right\} \right) >0\;\,\forall\,\epsilon>0,
\end{equation}
provided $f:[0,T] \longrightarrow \mathbb{R}$, $T>0$, is continuous and such that $f(0)=0$ (but otherwise arbitrary).
Since $\hat{B}(t)$ is continuous, it attains its minimum (which is obviously negative) in any interval $[0,T]$. Fix $\epsilon, T >0$ and build the function $f(t)$ in~\eqref{eq:stbm} as
$$
f(t) := \left( \min_{0 \le s \le T} \hat{B}(s) -\delta \right) \frac{t}{T}
$$
for any $\delta > 2\epsilon$; so that
\begin{eqnarray}\nonumber
f(T) + \epsilon < \min_{0 \le s \le T} \hat{B}(s) -\epsilon.
\end{eqnarray}
Then we have the inclusion of events
\begin{equation}\nonumber
\left\{ \sup_{0 \le t \le T} |W_t - f(t)| < \epsilon \right\} \subseteq\left\{ |W_T - f(T)| < \epsilon \right\} 
\subseteq \left\{ W_T - f(T) < \epsilon \right\}
\subseteq\omega_\epsilon(T)
\end{equation}
for $\omega_\epsilon(T):=\left\{ W_T < \min\limits_{0 \le s \le T} \hat{B}(s) -\epsilon \right\}$, and, thus, we obtain:
$$\nonumber
\mathbb{P} \left(\omega_\epsilon(T)\right) >0,\;\text{with $\epsilon >0$.}
$$
 Defining $\omega_{c}(t):=\left\{ \text{$W_t$ is continuous} \right\}$, by the law of total probability
\begin{equation}\nonumber
0 < \mathbb{P} \left( \omega_\epsilon(T) \right)
= \mathbb{P} \left( \omega_\epsilon(T) \cap \omega_{c}(t) \right)+
\mathbb{P} \left( \omega_\epsilon(T) \cap \omega_{c}^c(t) \right)
=\mathbb{P} \left( \omega_\epsilon(T)\cap \omega_{c}(t)\right),
\end{equation}
since $
0 \le \mathbb{P} \left(\omega_\epsilon(T)\cap \omega_{c}^c(t) \right)
\le \mathbb{P} \left( \omega_{c}^c(t) \right)=0$. And so, by the intermediate value theorem, $W_t$ and $\hat{B}(t)$ (equivalently, $V_t$ and $B(t)$) cross at some $t \in (0,T)$ with positive probability.
\end{proof}

\begin{remark}\label{rem:nfpt}
Note that if we modified the definition of the FPT by
$$
\tau := \inf \{ 0 \le t \le T : V_t = B(t)\}
$$
with the convention $\inf \emptyset = \infty$ and
for any $T >0$, then the proof would still directly imply that $\mathbb{P}(\tau < \infty) >0$.
\end{remark}

We continue stating an intuitive fact.

\begin{corollary}\label{obviousfact}
Let $V_t$ and $B(t)$ be as in the statement of Proposition~\ref{prop:tlti}. Then $\tau >0$ almost surely and hence $\mathbb{E}(\tau)>0$.
\end{corollary}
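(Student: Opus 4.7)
The plan is to exploit the strict inequality at the initial time together with the continuity of both the sample paths of $V_t$ and of the barrier $B(t)$. First, by hypothesis $V_0 > K = B(0)$, so the gap $V_0 - B(0)$ is strictly positive. Since $V_t$ is almost surely continuous (as established in the proof of Theorem~\ref{mainfpt}, via the explicit geometric Brownian motion formula) and $B(t)$ is continuous by assumption, the map $t \mapsto V_t - B(t)$ is almost surely continuous on $[0,\infty)$ and takes the strictly positive value $V_0 - K$ at $t=0$.

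Second, by continuity at $t=0$ (applied pathwise on the almost sure event of continuous paths), there exists an almost surely positive random variable $\delta(\omega) > 0$ such that $V_t(\omega) > B(t)$ for all $t \in [0,\delta(\omega))$. Hence $\tau \ge \delta > 0$ almost surely.

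Third, to upgrade this to $\mathbb{E}(\tau) > 0$, I would argue as follows. Since $\tau > 0$ almost surely, the events $\{\tau > 1/n\}$ increase to an event of probability one as $n \to \infty$, so there exists $n_0 \in \mathbb{N}$ with $\mathbb{P}(\tau > 1/n_0) > 0$. Then, using that $\tau \ge 0$ almost surely and Markov's inequality in reverse (or just $\tau \ge (1/n_0)\,\mathbf{1}_{\{\tau > 1/n_0\}}$), one gets
$$
\mathbb{E}(\tau) \ge \frac{1}{n_0}\, \mathbb{P}\!\left(\tau > \frac{1}{n_0}\right) > 0,
$$
which closes the argument. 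There is no real obstacle here: the statement is essentially a continuity observation, and the only care needed is to separate the pathwise continuity argument (which yields $\tau>0$ a.s.) from the subsequent elementary probabilistic bound that turns this into a strict positivity statement for the mean.
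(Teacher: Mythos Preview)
Your proof is correct and follows essentially the same approach as the paper: both use continuity of $t\mapsto V_t-B(t)$ at $t=0$ together with $V_0>B(0)$ to conclude $\tau>0$ almost surely, and then extract some $\epsilon>0$ with $\mathbb{P}(\tau>\epsilon)>0$ to force $\mathbb{E}(\tau)>0$. The only cosmetic difference is that the paper phrases the first step as a contradiction argument (assuming $\mathbb{P}(\tau=0)>0$ and reaching $\ln(K/V_0)=0$), whereas you argue directly via a pathwise random $\delta(\omega)>0$; your version is, if anything, cleaner and more self-contained.
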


\begin{proof}
Suppose on the contrary that $\mathbb{P}(\tau=0)>0$; then
$$
\mathbb{P} \left( \limsup_{T \searrow 0} \,\, \inf \{0 \le t \le T : V_t = B(t)\} \right) >0.
$$
But this, by the proof of Proposition~\ref{prop:tlti}, the continuity of $B(t)$, and the fact that
$$
\mathbb{P} \left( \left\{ \text{$W_t$ is continuous} \right\} \cap \left\{ \text{$W_0$ =0} \right\} \right)=1,
$$
implies that
$$
\mathbb{P} \left( \ln(K/V_0) /\sigma =0 \right) >0,
$$
in contradiction with the assumption $K < V_0$. Therefore
$\mathbb{P}(\tau=0)=0 \Rightarrow \mathbb{P}(\tau > 0)=1$, and then there exists an $\epsilon>0$ such that $\mathbb{P}(\tau > \epsilon)>0$; otherwise we get the contradiction
$$
0=\mathbb{P}(\tau > \epsilon) \underset{\epsilon \searrow 0}{\longrightarrow} \mathbb{P}(\tau > 0)
$$
by the right-continuity of $\mathbb{P}(\tau > \cdot)$. We conclude by the Markov inequality
$$
0 < \mathbb{P}(\tau > \epsilon) \le \frac{\mathbb{E}(\tau)}{\epsilon}.
$$
\end{proof}

\begin{remark}\label{rem:dich}
Note that this corollary along with the statements of Theorem~\ref{mainfpt} and Proposition~\ref{prop:tlti} reduce the critical behavior to the dichotomy
$$
\mathbb{P}(\tau < \infty) < 1 \qquad \text{or} \qquad \mathbb{P}(\tau < \infty) = 1;
$$
at least, if we restrict ourselves to qualitatively determine the finiteness of $\tau$.
\end{remark}

Now we move to the core of the topic of this section and start characterizing those barriers for which $\mathbb{P}(\tau < \infty) < 1$.

\begin{theorem}\label{theor:tlti}
Let $V_t$ and $B(t)$ be as in the statement of Proposition~\ref{prop:tlti}. Moreover, assume that the condition
$$
\limsup_{t \to \infty} B(t) \Big/ \left\{ K \exp \left[(\mu-\sigma^2/2)t - \sigma \sqrt{2t \ln(\ln(t))}\right] \right\} < 1
$$
holds. Then $\mathbb{P}(\tau < \infty) < 1$.
\end{theorem}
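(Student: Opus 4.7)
The plan is to prove a dual of part (a) of Theorem~\ref{mainfpt}: where that result forced a crossing whenever the liminf of the ratio exceeded $1$, here a limsup strictly below $1$ should prevent a crossing with positive probability. I would combine the LIL-based asymptotic control of $V_t$ from that earlier proof with a short-time Brownian support-theorem argument in the style of Proposition~\ref{prop:tlti}, so as to exhibit an event of positive probability on which $V_t > B(t)$ for every $t \geq 0$.

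First, I would rewrite the hypothesis in additive form: taking logarithms it becomes $\limsup_{t\to\infty}[\tilde{B}(t) + \sigma\sqrt{2t\ln(\ln(t))}] < 0$, so there exist $c > 0$ and $T_0 > 0$ with $\tilde{B}(t) \leq -\sigma\sqrt{2t\ln(\ln(t))} - c$ for every $t \geq T_0$. Since $\{V_t > B(t)\} = \{W_t > (\tilde{B}(t) - q)/\sigma\}$ with $q = \ln(V_0/K) > 0$, the threshold in terms of $W$ eventually lies below $-\sqrt{2t\ln(\ln(t))} - (c+q)/\sigma$.

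For the initial segment $[0, T_0]$ I would reproduce the construction in Proposition~\ref{prop:tlti}: invoke the support theorem of Brownian motion to find a continuous $f:[0,T_0]\to\mathbb{R}$ with $f(0)=0$, lying uniformly above $(\tilde{B}(t)-q)/\sigma$ by some margin, and with $f(T_0)=M$ a large prescribed value; this produces an event $A_1$ of positive probability on which $W_t > (\tilde{B}(t)-q)/\sigma$ throughout $[0,T_0]$ and $W_{T_0}$ is close to $M$. For the asymptotic segment $[T_0, \infty)$ I would apply the strong Markov property at $T_0$ and use the LIL: almost surely $\liminf_{t\to\infty} W_t/\sqrt{2t\ln(\ln(t))} = -1$, so asymptotically $W_t$ hugs the lower envelope $-\sqrt{2t\ln(\ln(t))}$, which the hypothesis places strictly above the threshold by the constant buffer $(c+q)/\sigma$. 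Together with the cushion at $T_0$ this should keep $W_t$ above the threshold for all $t \geq T_0$ with positive probability; intersecting with $A_1$ and applying the intermediate value theorem gives $\mathbb{P}(\tau=\infty)>0$.

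The main obstacle is the asymptotic comparison in the last step: the generic LIL bound $W_t \geq -(1+\varepsilon)\sqrt{2t\ln(\ln(t))}$, valid eventually almost surely, carries a slack of order $\varepsilon\sqrt{2t\ln(\ln(t))}$ that eventually dominates any fixed cushion $M + (c+q)/\sigma$. Closing this gap cleanly will likely require either a finer lower-envelope statement in the Kolmogorov--Erd\H{o}s--Petrowski integral-test spirit, or a direct comparison, via Lemma~\ref{lemma:rem2barriers}, against the explicit auxiliary barrier $K\exp[(\mu-\sigma^2/2)t - \sigma\sqrt{2t\ln(\ln(t))}]$, whose first-passage behaviour can be analyzed by hand and reduces the problem to a concrete asymptotic estimate rather than a bare appeal to the LIL.
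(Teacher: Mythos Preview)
Your architecture matches the paper's exactly: support theorem on a finite initial interval, LIL-type control on the tail, glued via the Markov property. The difference is only in how the tail is treated. You work additively and correctly note that the hypothesis yields merely a constant gap $c$ below $-\sigma\sqrt{2t\ln\ln t}$, which any $\varepsilon\sqrt{2t\ln\ln t}$ slack from the bare LIL overwhelms. The paper instead asserts that the hypothesis implies the \emph{ratio} condition
\[
\limsup_{t\to\infty}\frac{\tilde B(t)}{\sigma\sqrt{2t\ln\ln t}}<-1,
\]
after which the superadditivity of $\liminf$ together with $\liminf_{t\to\infty} W_t/\sqrt{2t\ln\ln t}=-1$ gives $\liminf_{t\to\infty}[\ln V_t-\ln B(t)]/\bigl(\sigma\sqrt{2t\ln\ln t}\bigr)>0$ almost surely; a short convergence-in-probability step then produces a deterministic $t_2$ with $\mathbb P\bigl(V_t>B(t)\ \forall\,t\ge t_2\bigr)>0$, and the support-theorem construction patches $[0,t_2]$ just as you outlined.

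That said, your additive reading is the accurate one. The stated hypothesis is equivalent to $\limsup_{t\to\infty}\bigl[\tilde B(t)+\sigma\sqrt{2t\ln\ln t}\bigr]<0$, which forces only $\limsup_{t\to\infty}\tilde B(t)/\bigl(\sigma\sqrt{2t\ln\ln t}\bigr)\le-1$; the strict inequality fails for $\tilde B(t)=-\sigma\sqrt{2t\ln\ln t}-c$. So the paper's opening implication appears to overclaim, and the obstacle you flagged is not genuinely removed by passing to ratios. Your instinct that a Kolmogorov--Erd\H{o}s--Petrowski integral-test refinement (or a sharper lower-envelope statement in the spirit of the Khoshnevisan--Levin--Shi rate used later in Theorem~\ref{theor:maincb}) is what is really needed at this boundary is well founded.
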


\begin{proof}
The condition in the statement implies
$$
\limsup_{t \to \infty} \frac{\ln(B(t))-(\mu - \sigma^2/2)t}{\sigma \sqrt{2t \ln(\ln(t))}} <-1.
$$
Then, arguing as in the proof of Theorem~\ref{mainfpt}, we find
\begin{eqnarray}\nonumber
&& \liminf_{t \to \infty} \frac{\ln(V_t)-\ln(B(t))}{\sigma \sqrt{2t \ln(\ln(t))}} \ge \\ \nonumber
&& \liminf_{t \to \infty} \frac{\ln(V_t)-(\mu - \sigma^2/2)t}{\sigma \sqrt{2t \ln(\ln(t))}} + \liminf_{t \to \infty} \frac{(\mu - \sigma^2/2)t-\ln(B(t))}{\sigma \sqrt{2t \ln(\ln(t))}} = \\ \nonumber
&& \liminf_{t \to \infty} \frac{\ln(V_t)-(\mu - \sigma^2/2)t}{\sigma \sqrt{2t \ln(\ln(t))}} - \limsup_{t \to \infty} \frac{\ln(B(t))-(\mu - \sigma^2/2)t}{\sigma \sqrt{2t \ln(\ln(t))}} >0
\end{eqnarray}
almost surely. Defining
$$Q_t:= \frac{\ln(V_t/B(t))}{\sigma \sqrt{2t \ln(\ln(t))}},$$
the last result means either
\begin{equation}\label{epsilon1}
\liminf_{t \to \infty} Q_t = \epsilon_1,\text{ a.s. for some $\epsilon_1 >0$,}\quad\text{ or else } \quad
\liminf_{t \to \infty} Q_t = \infty,
\end{equation}
almost surely. We start by assuming the existence of such an $\epsilon_1$. Since almost sure convergence implies convergence in probability, from the first equality in equation~\eqref{epsilon1} it follows that
$$
\lim_{t \to \infty} \mathbb{P}\left( \left\lbrace\left| \inf_{s \ge t} Q_s -\epsilon_1 \right| > \epsilon_2 \right\rbrace\right)=0\;\, \forall \, \epsilon_2>0.
$$
Thus, for each $\epsilon_2>0$ there exists a $t_2$ sufficiently large so that
$$
\mathbb{P}\left(\left\lbrace \left| \inf_{s \ge t_2} Q_s-\epsilon_1 \right| \le \epsilon_2 \right\rbrace \right)>0,
\;\text{ implying }\;
\mathbb{P}\left( \left\{ Q_t\ge \epsilon_1 - \epsilon_2 \,\, \forall \, t \ge t_2 \right\} \right)>0,
$$
due to the inclusion of events
$$ \left\lbrace \left| \inf_{s \ge t_2} Q_s -\epsilon_1 \right| \le \epsilon_2 \right\rbrace \subseteq \left\{ Q_t \ge \epsilon_1 - \epsilon_2 \,\, \forall \, t \ge t_2 \right\}.
$$
Finally, taking $t_2$ large enough (so that $\epsilon_2 < \epsilon_1$), we conclude that
$$
\mathbb{P}\left( \left\{ V_t > B(t) \,\, \forall \, t \ge t_2 \right\} \right)>0.
$$
Now assume $\epsilon_1$ does not exist; from the second equality in equation~\eqref{epsilon1} it follows that
$$
\lim_{t \to \infty} \mathbb{P}\left( \left\lbrace \inf_{s \ge t} Q_s > \epsilon_2' \right\rbrace\right)=1\;\, \forall \, \epsilon_2'>0.
$$
As previously, for each $\epsilon_2'>0$ there exists a $t_2'$ sufficiently large so that
$$
\mathbb{P}\left(\left\lbrace \inf_{s \ge t_2'} Q_s > \epsilon_2' \right\rbrace \right)>0,
\;\text{ implying }\;
\mathbb{P}\left( \left\{  Q_t > 0 \,\, \forall \, t \ge t_2' \right\} \right)>0;
$$
so the same conclusion as before holds, and we may drop the primes without loss of generality.

Now, define $\omega_>(t_a,t_b):=\{ V_t > {B}(t) \, \forall \, t\in (t_a,t_b)\}$ and $\omega_>(t_c):=\{ V_{t_c} > {B}(t_c)\}$ for any $t_b > t_a \ge 0$ and $t_c \ge 0$. Then, we have:
$$ 
 \mathbb{P}\left( \left\{ V_t > B(t) \,\, \forall \, t \ge 0 \right\} \right) =
 \mathbb{P}\left( \omega_>(0,t_2) \cap \omega_>(t_2) \cap\omega_>(t_2,\infty) \right)\;\forall \, t_2 >0.
$$
Note that $\omega_>(t_a,t_b)=\{ W_t > \hat{B}(t) \,\, \forall \, t\in (t_a,t_b)\}$ and $\omega_>(t_c)=\{ W_{t_c} > \hat{B}(t_c)\}$ because $V_\cdot > B(\cdot) \Leftrightarrow W_\cdot > \hat{B}(\cdot)$; then:
\begin{eqnarray}\nonumber
\mathbb{P}\left( \omega_>(0,\infty)\right) &=&
\mathbb{P}\left( \omega_>(0,t_2) \cap \omega_>(t_2) \cap \omega_>(t_2,\infty) \right) \\ \nonumber
&=& \mathbb{P}\left(\omega_>(t_2,\infty)| \, \omega_>(t_2) \cap \omega_>(0,t_2) \right)
 \times \mathbb{P}\left( \omega_>(t_2) \cap \omega_>(0,t_2)\right)
\\ \nonumber
&=& \mathbb{P}\left( \omega_>(t_2,\infty) \left| \, \omega_>(t_2) \right. \right) 
\times \mathbb{P}\left( \omega_>(t_2) \cap \omega_>(0,t_2) \right)
\\ \nonumber
&=& \left[ \mathbb{P}\left(\{ W_t > \hat{B}(t) \,\, \forall \, t \ge t_2 \}\right) \Big/ \, \mathbb{P}\left(\omega_>(t_2)\right) \right]
\\ \nonumber
&& \times \mathbb{P}\left( \omega_>(t_2) \cap \omega_>(0,t_2) \right),
\end{eqnarray}
where we have used the definition of conditional probability and the Markovianity of Brownian motion. Since $\hat{B}(\cdot)$ is continuous, it attains its minimum on any closed interval, which in this case we denote
$\mathfrak{m}_2:= \min_{0 \le t \le t_2} \hat{B}(t)$. We thus deduce
\begin{eqnarray}\nonumber
 \mathbb{P}\left(\omega_>(0,\infty) \right) &\ge& \left[ \mathbb{P}\left(\{ W_t > \hat{B}(t) \,\, \forall \, t \ge t_2 \}\right) \Big/ \, \mathbb{P}\left(\left\{ W_{t_2} > \mathfrak{m}_2 \right\}\right) \right]
\\ \nonumber
&& \times \mathbb{P}\left( \omega_>(t_2) \cap \omega_>(0,t_2) \right),
\end{eqnarray}
where $\mathbb{P}\left(\left\{ W_{t_2} > \mathfrak{m}_2 \right\}\right) \in (0,1)$, since $W_{\cdot}$ is normally distributed. 

Now, again by the support theorem of Brownian motion, we know that
$$
\mathbb{P} \left( \left\{ \sup_{0 \le t \le t_2} |W_t - g(t)| < \epsilon_3 \right\} \right) >0
$$
for any $\epsilon_3 >0$, where we can take $g:[0,t_2] \longrightarrow \mathbb{R}$ to be $g(t):= \hat{B}(t)+\ln(V_0/K)/\sigma$, as it is continuous and $g(0)=0$. If we take $\epsilon_3 < \ln(V_0/K)/(2\sigma)$, we might conclude that
$$
\mathbb{P}\left( \omega_>(t_2) \cap \omega_>(0,t_2) \right)>0
\Longrightarrow
\mathbb{P} \left( \left\{ V_t > B(t) \,\, \forall \, t \ge 0 \right\} \right) >0.
$$

Since condition $\mathbb{P} \left( \left\{ V_t > B(t) \,\, \forall \, t \ge 0 \right\} \right) >0$ is complementary to that in the statement, this concludes the proof.
\end{proof}

This theorem has an immediate consequence:

\begin{corollary}\label{cor:nfpt}
Let $V_t$ and $B(t)$ be as in the statement of Proposition~\ref{prop:tlti}. If we define the FPT as
$$
\tau := \inf \{0 \le t \le T : V_t = B(t)\}
$$
with the convention $\inf \emptyset = \infty$, then $\mathbb{P}(\tau < \infty) < 1$ for any $T>0$.
\end{corollary}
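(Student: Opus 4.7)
My plan is to extract the support-theorem argument from the proof of Theorem~\ref{theor:tlti}, observing that it yields the conclusion directly and does not actually require the asymptotic hypothesis on $B(t)$ that appears in Theorem~\ref{theor:tlti}. The key point is that with the modified definition $\inf\emptyset = \infty$, the event $\{\tau = \infty\}$ is exactly the event $\{V_t > B(t) \text{ for all } t \in [0,T]\}$, so it suffices to show that this event has positive probability for any fixed $T>0$.

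First I would perform the by-now-standard change of variables $V_t > B(t) \iff W_t > \hat{B}(t)$, where
$$
\hat{B}(t) := \frac{\ln(B(t))-(\mu-\sigma^2/2)t - \ln(V_0)}{\sigma}
$$
is continuous on $[0,T]$ and satisfies $\hat{B}(0) = \ln(K/V_0)/\sigma < 0$ due to the assumption $K < V_0$. The problem reduces to showing that $\mathbb{P}(\{W_t > \hat{B}(t)\;\forall t \in [0,T]\}) > 0$.

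Next, define $g:[0,T]\longrightarrow\mathbb{R}$ by $g(t) := \hat{B}(t) - \hat{B}(0)$, which is continuous with $g(0)=0$. By the support theorem of Brownian motion, for every $\epsilon > 0$,
$$
\mathbb{P}\left( \sup_{0 \le t \le T} |W_t - g(t)| < \epsilon \right) > 0.
$$
Choose $\epsilon > 0$ with $\epsilon < -\hat{B}(0)/2 = \ln(V_0/K)/(2\sigma)$. On the event above, for every $t \in [0,T]$ we have
$$
W_t > g(t) - \epsilon = \hat{B}(t) - \hat{B}(0) - \epsilon > \hat{B}(t),
$$
since $-\hat{B}(0) - \epsilon > -\hat{B}(0)/2 > 0$. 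Undoing the change of variables, this implies $\mathbb{P}(\{V_t > B(t)\;\forall t \in [0,T]\}) > 0$, i.e. $\mathbb{P}(\tau = \infty) > 0$, and therefore $\mathbb{P}(\tau < \infty) < 1$.

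There is no real obstacle; the only delicate point is remembering to translate by $\hat{B}(0)$ before invoking the support theorem (which requires a curve starting at the origin) and choosing $\epsilon$ small enough relative to the strictly positive gap $\ln(V_0/K)/\sigma$ so that the uniform $\epsilon$-tube around the translated curve lies above the original barrier. No asymptotic information on $B(t)$ is needed, which is why the statement holds for arbitrary continuous barriers satisfying the hypotheses of Proposition~\ref{prop:tlti}.
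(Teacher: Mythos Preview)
Your proposal is correct and is essentially identical to the paper's own argument: the paper proves Corollary~\ref{cor:nfpt} by invoking ``the last part of the proof of Theorem~\ref{theor:tlti}'', which is precisely the support-theorem step you reproduce, with the same translated curve $g(t)=\hat{B}(t)+\ln(V_0/K)/\sigma$ and the same choice $\epsilon<\ln(V_0/K)/(2\sigma)$. You have also made explicit the (easy) identification $\{\tau=\infty\}=\{V_t>B(t)\ \forall\,t\in[0,T]\}$ and the fact that the asymptotic hypothesis of Theorem~\ref{theor:tlti} plays no role in this step.
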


\begin{proof}
Immediate from the last part of the proof of Theorem~\ref{theor:tlti}.
\end{proof}

\begin{remark}\label{rem:cornfpt}
We already noted in Remark~\ref{rem:dich} that the first passage problem presents a dichotomy when posed on the real half-line $[0,\infty)$. Nevertheless, we can conclude, from Remark~\ref{rem:nfpt} and Corollary~\ref{cor:nfpt}, that this dichotomy disappears when this problem is posed on a finite time interval $[0,T]$, for any $T>0$, as in such a case we always have 
$
0 < \mathbb{P}(\tau < \infty) < 1.
$
\end{remark}

Our next step is to prove the final result of this section.

\begin{theorem}\label{theor:maincb}
Let $V_t$ and $B(t)$ be as in the statement of Proposition~\ref{prop:tlti}. Moreover, assume that the condition
$$
\liminf_{t \to \infty} B(t) \Big/ \left\{ K \exp \left[(\mu-\sigma^2/2)t - \sigma \sqrt{2t \ln(\ln(t))}\right] \right\} \ge 1
$$
holds. Then $\mathbb{P}(\tau < \infty) = 1$.
\end{theorem}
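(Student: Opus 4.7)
The plan is to sharpen the argument of Theorem~\ref{mainfpt}(a) to cover the borderline case in which the limit inferior equals $1$ rather than strictly exceeding it. I begin with the usual change of variables: with $q := \ln(V_0/K) > 0$ and $\tilde B(t) := \ln(B(t)/K) - (\mu-\sigma^2/2)t$, the FPT rewrites as $\tau = \inf\{t \ge 0 : \sigma W_t = \tilde B(t) - q\}$, and by continuity and strict monotonicity of the exponential the hypothesis becomes
$$
\liminf_{t \to \infty} \bigl[\tilde B(t) + \sigma\sqrt{2t\ln(\ln(t))}\bigr] \ge 0.
$$
Hence for every $\epsilon > 0$ there exists $T_\epsilon$ such that $(\tilde B(t)-q)/\sigma \ge -\sqrt{2t\ln(\ln(t))} - (\epsilon+q)/\sigma$ for all $t \ge T_\epsilon$. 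Since $W_0 - (\tilde B(0)-q)/\sigma = q/\sigma > 0$ and the paths $t \mapsto W_t$ and $t \mapsto \tilde B(t)$ are almost surely continuous, the intermediate value theorem reduces the claim $\tau < \infty$ almost surely to exhibiting, with probability one, a single time $t \ge T_\epsilon$ at which $W_t \le -\sqrt{2t\ln(\ln(t))} - C$, where $C := (\epsilon+q)/\sigma > 0$.

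The crux of the proof is therefore a refinement of Khintchine's law of the iterated logarithm that goes beyond $\liminf_{t\to\infty} W_t/\sqrt{2t\ln(\ln(t))} = -1$: for any $C > 0$ the function $h(t) := \sqrt{2t\ln(\ln(t))} + C$ is a \emph{lower function} for Brownian motion, in the sense that $\mathbb{P}(W_t > h(t) \text{ infinitely often}) = 1$. Since $-W$ is again a standard Brownian motion, this is equivalent to $\mathbb{P}(W_t < -h(t) \text{ infinitely often}) = 1$, which is precisely what is needed. The statement is an instance of the Kolmogorov--Erd\H{o}s--Petrovski integral test, whose borderline integral
$$
\int^{\infty} \frac{h(t)}{t^{3/2}}\, \exp\!\left(-\frac{h^2(t)}{2t}\right) dt
$$
diverges for this choice of $h$: after the successive substitutions $u = \ln t$ and $v = \ln u$ the integrand is asymptotically a positive multiple of $\sqrt{v}$. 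Consequently $h$ cannot be an upper function, and by the 0-1 dichotomy underlying the test this is equivalent to the lower-function property stated above.

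The principal obstacle is precisely this last refinement: Khintchine's LIL by itself only delivers $W_t < -(1-\delta)\sqrt{2t\ln(\ln(t))}$ infinitely often for every $\delta > 0$, which does not preclude $W_t$ from remaining strictly above $-\sqrt{2t\ln(\ln(t))} - C$ for all sufficiently large $t$. Invoking the integral test is the most economical remedy; an alternative fully self-contained route is a direct second Borel--Cantelli argument along a sparse deterministic sequence of times with independent increments, combining the Gaussian tail estimate $\mathbb{P}(W_1 < -a) \sim (a\sqrt{2\pi})^{-1} e^{-a^2/2}$ with a control of the intermediate terms via the upper half of the LIL. Either route, merged with the reduction of the first paragraph, completes the proof of $\tau < \infty$ almost surely.
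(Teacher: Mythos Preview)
Your argument is correct. Both your proof and the paper's hinge on the same refinement of Khintchine's law of the iterated logarithm---namely, that for any constant $C>0$ one has $W_t<-\sqrt{2t\ln\ln t}-C$ infinitely often almost surely---but you reach it by different tools. The paper appeals to the quantitative rate-of-convergence result of Khoshnevisan--Levin--Shi, which says
\[
\lim_{t\to\infty}\frac{\ln\ln t}{\ln\ln\ln t}\left[\inf_{s\ge t}\frac{W_s}{\sqrt{2s\ln\ln s}}+1\right]=-\frac34\quad\text{a.s.},
\]
and then manipulates this together with the hypothesis on $B(t)$ to force $\liminf_{t\to\infty}[\ln V_t-\ln B(t)]=-\infty$. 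You instead invoke the classical Kolmogorov--Erd\H{o}s--Petrowsky integral test, check directly that the integral $\int^\infty t^{-3/2}h(t)\exp(-h(t)^2/(2t))\,dt$ diverges for $h(t)=\sqrt{2t\ln\ln t}+C$, and conclude that $h$ is a lower function. Your route is more elementary and self-contained (the EKP test dates to the 1940s and is in most standard references, whereas the paper's source is a 2005 research paper), and it handles the cases of equality and strict inequality in the hypothesis uniformly, without needing to fall back on Theorem~\ref{mainfpt}(a) for the latter. The paper's approach, on the other hand, imports a sharper asymptotic statement than is actually needed here; the extra precision (the constant $3/4$) is not used in the conclusion but could in principle be leveraged for finer results.
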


\begin{proof}
We might assume the equality, as the case of the inequality was already proven in Theorem~\ref{mainfpt}. 
So far we have used extensively the Khintchine law of the iterated logarithm, but this case requires to take into account its convergence rate too. For this, we follow~\cite{KhoshnevisanLevinShi2005}, which extends the classical works~\cite{DarlingErdos1956,Erdos1942LIL}. In particular, Theorem~1.2 in~\cite{KhoshnevisanLevinShi2005} states that the convergence
$$
\lim_{t \to \infty} \frac{\ln(\ln(t))}{\ln(\ln(\ln(t)))} \left[ \sup_{t \le s} \frac{W_s}{\sqrt{2 s \ln(\ln(s))}} -1 \right]=\frac34
$$
takes place almost surely. By symmetry, we also have the almost sure convergence
$$
\lim_{t \to \infty} \frac{\ln(\ln(t))}{\ln(\ln(\ln(t)))} \left[ \inf_{t \le s} \frac{W_s}{\sqrt{2 s \ln(\ln(s))}} +1 \right]=-\frac34.
$$
These imply that, with probability one, it holds that
\begin{eqnarray}\nonumber
&& \liminf_{t \to \infty}
\frac{\ln(\ln(t))}{\ln(\ln(\ln(t)))} \left[
\frac{\ln(V_t)-\ln(B(t))}{\sigma \sqrt{2 t \ln(\ln(t))}} \right] \\ \nonumber
&\le& \liminf_{t \to \infty}
\frac{\ln(\ln(t))}{\ln(\ln(\ln(t)))} \left[
\frac{\ln(V_t)-\ln (\inf_{t \le s} B(s) )}{\sigma \sqrt{2 t \ln(\ln(t))}} \right] \\ \nonumber
&=& \liminf_{t \to \infty} \frac{\ln(\ln(t))}{\ln(\ln(\ln(t)))} \left[ \frac{\ln(V_0/K)/\sigma + W_t}{\sqrt{2t \ln(\ln(t))}} + 1 \right] = -\frac34,
\end{eqnarray}
and, thus, $\liminf_{t \to \infty} \ln(V_t)-\ln(B(t))=-\infty$, so the statement follows from the continuity of $B(t)$, the almost sure continuity of $V_t$, the monotony and continuity of the logarithm, and the intermediate value theorem.
\end{proof}

\begin{remark}
\label{rem:undecidable_cases}
The classification of barriers that we have proven so far is not exhaustive in the sense that it does not cover certain cases. Precisely, for barriers that fulfill simultaneously the conditions
\begin{eqnarray}\nonumber
\limsup_{t \to \infty} B(t) \Big/ \left\{ K \exp \left[(\mu-\sigma^2/2)t - \sigma \sqrt{2t \ln(\ln(t))}\right] \right\} &\ge& 1, \\ \nonumber
\liminf_{t \to \infty} B(t) \Big/ \left\{ K \exp \left[(\mu-\sigma^2/2)t - \sigma \sqrt{2t \ln(\ln(t))}\right] \right\} &<& 1, \\ \nonumber
\limsup_{t \to \infty} B(t) \Big/ \left\{ K \exp \left[(\mu-\sigma^2/2)t + \sigma \sqrt{2t \ln(\ln(t))}\right] \right\} &\le& 1,
\end{eqnarray}
we cannot deduce if $0 < \mathbb{P}(\tau < \infty) < 1$ or $\mathbb{P}(\tau < \infty)=1$ (we only know that $0 < \mathbb{P}(\tau < \infty) \le 1$). Thus, more conditions must be added to ascertain the correct classification in these cases (see Section~\ref{sec:examples}). Besides, this uncertainty cannot take place for barriers with a concretely defined asymptotic behavior, in the sense 
\begin{eqnarray}\nonumber
\limsup_{t \to \infty} B(t) \Big/ \left\{ K \exp \left[(\mu-\sigma^2/2)t - \sigma \sqrt{2t \ln(\ln(t))}\right] \right\} &=& \\ \nonumber
\liminf_{t \to \infty} B(t) \Big/ \left\{ K \exp \left[(\mu-\sigma^2/2)t - \sigma \sqrt{2t \ln(\ln(t))}\right] \right\}. &&
\end{eqnarray}
In this case, we have an exhaustive classification: if
$$
\lim_{t \to \infty} B(t) \Big/ \left\{ K \exp \left[(\mu-\sigma^2/2)t - \sigma \sqrt{2t \ln(\ln(t))}\right] \right\} < 1,
$$
then $0<\mathbb{P}(\tau < \infty) < 1$, while if
$$
\lim_{t \to \infty} B(t) \Big/ \left\{ K \exp \left[(\mu-\sigma^2/2)t - \sigma \sqrt{2t \ln(\ln(t))}\right] \right\} \ge 1,
$$
then $\mathbb{P}(\tau < \infty) = 1$.
\end{remark}

In the next section, we will focus on the first moment of the FPT. We will show that the critical behavior proven here for the survival probability resembles that for the finiteness of the mean FPT, but with different thresholds.

\section{Mean FPTs for general barriers}
\label{sec:mfpts}

The aim of this section is to generalize the previous results concerning mean FPTs to include a wider range of barrier functions. We begin by introducing a refined version of part (b) of Theorem~\ref{mainfpt} and Corollary~\ref{cor:mainfpt}. 

\begin{theorem} 
\label{theor:second_fpt}
Denote $B_c(t,\alpha):=K \exp \left[(\mu-\sigma^2/2)t + \alpha \sqrt{t} \, \right]$ and let $B(t)$ be a barrier function that fulfills the same properties as those stated in Theorem~\ref{mainfpt}. Define $\tau := \inf \{t \ge 0 : V_t = B(t)\}$; then:
\begin{enumerate}[label=(\alph*)]
\item If, for some $\alpha>\sigma$,
$$
\liminf\limits_{t\rightarrow\infty}\dfrac{B(t)}{B_c(t,\alpha)} > 1,
$$
 then $\mathbb{E}(\tau)<\infty$. In particular, with the Gaussian probability distribution function denoted by
$$
\phi_{a,b}(x):=\frac{1}{\sqrt{2\pi} \, b}\exp \left\lbrace -\frac{(x-a)^2}{2 \, b^2} \right\rbrace,
$$
the following upper bound holds
\begin{equation}\nonumber
\mathbb{E}(\tau) \le T + \int_{\alpha\sqrt{T}}^\infty \frac{x^2}{\alpha^2 - \sigma^2} \, \phi_{q,\sigma\sqrt{T}}(x) \, dx,
\end{equation}
where $T:=\inf \left\lbrace t\geq 0 \, \big\vert \, B(s) \ge B_c(s,\alpha) \; \forall \, s \ge t \right\rbrace$ and $q:=\ln(V_0/K)$.
\item If, conversely, 
$$
\limsup\limits_{t\rightarrow\infty}\dfrac{B(t)}{B_c(t,\sigma)} < 1,
$$
then $E(\tau)=\infty$.
\end{enumerate}
\end{theorem}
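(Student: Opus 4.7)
The plan for both parts is to combine Lemma~\ref{lemma:rem2barriers} with the strong Markov property at a well-chosen deterministic time, and ultimately reduce the question to an instance of Theorem~\ref{mainfpt}(b), which already provides the exact mean (for $\alpha > \sigma$) or declares the mean infinite (for $\alpha = \sigma$).

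For part~(a), the first step is to note that the hypothesis $\liminf_{t\to\infty} B(t)/B_c(t,\alpha) > 1$ forces $T < \infty$ together with $B(s) \ge B_c(s,\alpha)$ for every $s \ge T$. I would then split $\mathbb{E}(\tau) \le T + \mathbb{E}[(\tau - T)\mathbf{1}_{\tau > T}]$, which is immediate from $\tau \wedge T \le T$. On $\{\tau > T\}$ we have $V_T > B(T) \ge B_c(T,\alpha)$, so the strong Markov property at $T$ combined with Lemma~\ref{lemma:rem2barriers} bounds the continuation FPT $\tau - T$ from above by the FPT $\tau^*$ of a fresh GBM started from $V_T$ through the barrier $B_c(T+\cdot,\alpha)$. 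Translating into the log-variable as in the proof of Theorem~\ref{mainfpt}, $\tau^* = \inf\{s \ge 0 : y + \sigma B_s = \alpha\sqrt{T+s}\}$ with $y := \ln(V_T/K) - (\mu-\sigma^2/2)T = q + \sigma W_T \sim N(q,\sigma^2 T)$. Exploiting $\sqrt{T+s} \ge \sqrt{s}$, a second application of Lemma~\ref{lemma:rem2barriers} further dominates $\tau^*$ by $\inf\{s \ge 0 : y + \sigma B_s = \alpha\sqrt{s}\}$, whose conditional mean given $y$ equals $y^2/(\alpha^2-\sigma^2)$ by the optional-stopping computation performed in Theorem~\ref{mainfpt}(b). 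Since $\{\tau > T\} \subseteq \{y > \alpha\sqrt{T}\}$, conditioning on $\mathcal{F}_T$ and integrating the Gaussian density $\phi_{q,\sigma\sqrt{T}}$ of $y$ over $(\alpha\sqrt{T},\infty)$ produces precisely the claimed upper bound.

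For part~(b), the hypothesis supplies $T^* < \infty$ and $\varepsilon \in (0,1)$ with $B(t) \le (1-\varepsilon) B_c(t,\sigma)$ for every $t \ge T^*$. The subadditivity $\sqrt{T^* + s} \le \sqrt{T^*} + \sqrt{s}$ recasts this as $B(T^* + s) \le K^* \exp[(\mu-\sigma^2/2)s + \sigma\sqrt{s}]$ for all $s \ge 0$, where $K^* := (1-\varepsilon) B_c(T^*,\sigma)$. The next step is to focus on the event $E := \{\tau > T^*,\, V_{T^*} > K^*\} \in \mathcal{F}_{T^*}$: on $E$, strong Markov at $T^*$ together with Lemma~\ref{lemma:rem2barriers} shows that the continuation FPT $\tau - T^*$ dominates the FPT of a fresh GBM from $V_{T^*}$ through the barrier $K^* \exp[(\mu-\sigma^2/2)s + \sigma\sqrt{s}]$; Theorem~\ref{mainfpt}(b), applied with $V_0$ replaced by $V_{T^*}$, $K$ by $K^*$, and $\alpha = \sigma$, forces this dominated FPT to have infinite mean, so $\mathbb{E}[\tau - T^* \mid \mathcal{F}_{T^*}] = +\infty$ on $E$. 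It then remains to confirm $\mathbb{P}(E) > 0$, for which I would mimic the construction used inside Proposition~\ref{prop:tlti}: choose a continuous $f:[0,T^*]\to\mathbb{R}$ with $f(0)=0$ such that any Brownian path in a small uniform tube around $f$ simultaneously keeps $V$ strictly above $B$ on $[0,T^*]$ and places $V_{T^*}$ strictly above $K^*$; the support theorem of Brownian motion assigns positive probability to such tubes. The tower property then yields $\mathbb{E}(\tau) = +\infty$.

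The main technical obstacle I anticipate is the handling of the two invocations of Lemma~\ref{lemma:rem2barriers} inside the conditional / strong-Markov arguments: one must verify that the initial-position hypothesis $C_0 > B_1(0)$ of the lemma transfers correctly to events measurable with respect to $\mathcal{F}_T$, respectively $\mathcal{F}_{T^*}$, and, in part~(b), formalize the inference that $\mathbb{E}[\,\cdot \mid \mathcal{F}_{T^*}] = +\infty$ on a set of positive probability already implies $\mathbb{E}(\tau) = +\infty$, without slipping into an $\infty \cdot 0$ ambiguity. Once those measurability and conditioning points are pinned down, the remaining estimates reduce to elementary manipulations of Gaussian integrals and square roots.
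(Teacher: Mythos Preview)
Your proposal is correct and follows essentially the same route as the paper: in part~(a) both arguments split at the deterministic time $T$, invoke the Markov property together with Lemma~\ref{lemma:rem2barriers}, use $\sqrt{T+s}\ge\sqrt{s}$ to reduce to the square-root barrier of Theorem~\ref{mainfpt}(b)/Corollary~\ref{cor:mainfpt}, and integrate against the Gaussian law of $q+\sigma W_T$; in part~(b) both use subadditivity of the square root and the $\alpha=\sigma$ case of Theorem~\ref{mainfpt}(b) after restarting at a deterministic time, with the only cosmetic difference that the paper exploits the equality $B(T_c)=B_c(T_c,\sigma)$ at the last crossing (so that $\{\tau>T_c\}$ already forces the required initial condition, and $\mathbb{P}(\tau>T_c)>0$ comes from Corollary~\ref{cor:nfpt}) whereas you carry the factor $(1-\varepsilon)$ and invoke the support theorem directly.
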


\begin{proof}
For part \textit{(a)} of the statement, note $\tau<\infty$ almost surely by part \textit{(a)} of Theorem~\ref{mainfpt}. Also, by the continuity of $B(\cdot),B_c(\cdot,\alpha)$, their relative limit behavior, and the intermediate value theorem, $\exists \, T \ge 0$ such that:
$$
T:=\inf \left\lbrace t\geq 0 \, \big\vert \, B(s) \geq B_c(s,\alpha) \; \forall \, s \ge t \right\rbrace<\infty.
$$
This deterministic time denotes the last crossing point of the graphs of $B(\cdot)$ and $B_c(\cdot,\alpha)$ provided it exists, being zero otherwise; by crossing we mean that the relative order of these graphs changes at this point. If $T=0$, this brings us back to the cases analyzed in Theorem~\ref{mainfpt} and Corollary~\ref{cor:mainfpt}, so from now on we assume $T>0$. By the proof of part~\textit{(a)} of Theorem~\ref{mainfpt}, we might rewrite the first passage time as:
\begin{eqnarray}\nonumber
\tau &=& \inf \{t \ge 0 : \ln(V_t) = \ln(B(t))\}\\ \nonumber
&=& \inf \{t \ge 0 : W_t = \ln(B(t))/\sigma - \left[(\mu - \sigma^2/2)t + \ln(V_0) \right]/\sigma \}.
\end{eqnarray}
Now, by the law of the total expectation,
\begin{eqnarray}
\label{eq:total_expectation}
\nonumber
\mathbb{E}(\tau)&=&\mathbb{E}(\tau 1_{\tau\leq T})+\mathbb{E}(\tau 1_{\tau> T})\\ \nonumber
&\leq& T \, \mathbb{P}(\tau\leq T)+\mathbb{E}(\tau 1_{\tau> T}) \\
&=& T+\mathbb{E}[(\tau-T) 1_{\tau> T}],
\end{eqnarray}
and for the last term, again by this law, we can compute:
$$
\mathbb{E}[(\tau-T) 1_{\tau> T}]=\int\limits_{B(T)}^\infty\mathbb{E}(\tau-T\vert\tau>T,V_T=x) \, \mathbb{P}(V_T\in dx, \tau>T),
$$
where the probability density can be decomposed as the product
$$\mathbb{P}(V_T\in dx, \tau>T)=\mathbb{P}(V_T\in dx) \, \mathbb{P}( \tau>T\vert V_T =x).
$$
For brevity, we will denote $L_T(x) dx:=\mathbb{P}(V_T\in dx)$, referencing the lognormal distribution of $V_T$.

We will now focus on $\mathbb{E}(\tau-T\vert\tau>T,V_T=x)$. 
Firstly, consider the random time defined as $\tau^\prime:=(\tau-T)1_{\tau>T}$. That is, $\tau^\prime=0$ if $\tau \leq T$ and else it is the stopping time
\begin{eqnarray}\nonumber
\tau' &=& \inf \{ t \ge 0 : V_{t+T}=B(t+T)\}\\ \nonumber
&=&  \inf \{ t \ge 0 : W_{t+T}-W_{T} = \ln[B(t+T)/V_T]/\sigma - \left[(\mu - \sigma^2/2)t \right]/\sigma \}\\ \nonumber
&=&  \inf \{ t \ge 0 : \sigma(W_{t+T}-W_{T})+q+\sigma W_T = \tilde{B}(t+T)\},
\end{eqnarray}
where we have used that $\tilde{B}(t):=\ln\left(B(t)/[K \exp\{(\mu-\sigma^2/2)t\}]\right)$ and $W_T=[\ln(V_T/V_0)-(\mu-\sigma^2/2)T]/\sigma$. Thus, for each $x>B(T)$ (note that $V_T>B(T)\Leftrightarrow \sigma W_T +q> \alpha\sqrt{T}$, by the definition of $T$) we have:
$$
\mathbb{E}(\tau-T\vert V_T=x,\tau>T) =\mathbb{E}(\tau^\prime\vert V_T=x,\tau>T)=\mathbb{E}(\tau^\prime\vert V_T=x),
$$ 
where the last equality is obtained by the Markovian nature of $W_t$. Besides, $B(t+T)\geq K \exp\{(\mu-\sigma^2/2)(t+T)+\alpha\sqrt{t+T}\}$ by the definition of $T$, implying
\begin{eqnarray}\nonumber
B(t+T) &\geq& K \exp\{(\mu-\sigma^2/2)T\} \\ \nonumber
&& \times
\exp\{(\mu-\sigma^2/2)t+\alpha \sqrt{t+T}\} \\ \nonumber
 &>& K \exp\{(\mu-\sigma^2/2)T\} \\ \nonumber
&& \times
\exp\{(\mu-\sigma^2/2)t+\alpha \sqrt{t}\} \\ \nonumber
&=& B_c(t,\alpha) \exp\{(\mu-\sigma^2/2)T\}.
\end{eqnarray}
By the translation invariance of Brownian motion, the stochastic process
$\{W_{t+T}-W_{T},t \ge 0\}$ is a standard Brownian motion and, by the independence of Brownian increments, it is independent of $W_T$ (and of $V_T$ since $V_T$ is $\sigma(W_T)-$measurable). Therefore, $\tau'$ for a fixed value of $V_T$ (or equivalently $W_T$), falls under the assumptions of Theorem~\ref{mainfpt} and Corollary~\ref{cor:mainfpt}, 
so it is summable and moreover
$$
\mathbb{E}(\tau' \, \vert \, W_T=z) \le \frac{(q +\sigma z)^2}{\alpha^2 - \sigma^2}.
$$
Now, substituting this estimate in equation~\eqref{eq:total_expectation}, we have:
\begin{eqnarray}
\label{eq:for_lemma}
\nonumber
\mathbb{E}(\tau) &\le& T + \int\limits_{B(T)}^\infty\frac{|q+\sigma z(x)|^2}{\alpha^2 - \sigma^2} \, \mathbb{P}(\tau>T\vert V_T=x) \, L_T(x) \, dx \\
&=&
T + \int\limits_{\alpha\sqrt{T}}^\infty\frac{y^2}{\alpha^2 - \sigma^2} \, \mathbb{P}(\tau>T\vert \sigma W_T+q=y) \, \phi_{q,\sigma \sqrt{T}}(y) \, dy,
\end{eqnarray}
where we have used the relations $y=\sigma z(x)+q=\ln(x/K)-(\mu-\sigma^2/2)T$. Note that the change in the integration limits comes from the fact that $V_T>B(T)\Leftrightarrow \sigma W_T+q>\alpha\sqrt{T}$ due to the definition of $T$, which implies $B(T)=B_c(T,\alpha)$. Also, note that, since $V_T$ is lognormally distributed, then $q+\sigma W_T$ is normally distributed with mean $q$ and variance $\sigma^2T$, hence the appearance of $\phi_{q,\sigma\sqrt{T}}(y)$ in the second integrand. To obtain the expression in the statement, simply consider
$$
\mathbb{P}(\tau>T\vert \sigma W_T+q=y)\leq 1
$$
uniformly in $y$.

The proof of part \textit{(b)} is as follows. Firstly, we consider the deterministic crossing time $T_c$ of the barrier $B(t)$ through the critical barrier $B_c(t,\sigma)$. Indeed, by the continuity of $B(\cdot),B_c(\cdot,\sigma)$, their relative limit behavior, and the intermediate value theorem, $\exists \, T_c \ge 0$ such that:
$$
T_c :=\inf \left\lbrace t\geq 0 \, \big\vert \, B(s) \le B_c(s,\sigma) \; \forall \, s \ge t \right\rbrace<\infty.
$$
As before, the case $T_c=0$ is analyzed in Theorem~\ref{mainfpt}, so $T_c>0$ will be assumed. We introduce two random times, $\tau^\prime_c:=(\tau-T_c)1_{\tau>T_c}$ and
$$
\tau_{1}:=\inf \left\lbrace t\geq 0 \, \big\vert \sigma(W_{t+T_c}-W_{T_c})+q+\sigma W_{T_c}=\sigma\sqrt{t+T_c}\right\rbrace.
$$
By the definition of $T_c$, $\tilde{B}(t+T_c)\leq \sigma\sqrt{t+T_c}$. Thus, $\tau^\prime_c \geq \tau_1 \, 1_{\tau>T_c}$ almost surely by Lemma~\ref{lemma:rem2barriers}. We add now another FPT,
$$
\tau_{2}:=\inf \left\lbrace t\geq 0 \, \big\vert \sigma(W_{t+T_c}-W_{T_c})+q+\sigma W_{T_c}=\sigma\left(\sqrt{t}+\sqrt{T_c}\right)\right\rbrace,
$$
and we realize that, by the subadditivity of the square root, i.e. $\sqrt{t+T_c}\leq \sqrt{t}+\sqrt{T_c}\;\forall \, t,T_c\geq 0$, it holds that $\tau_1 \, 1_{\tau>T_c} \geq \tau_2 \, 1_{\tau>T_c}$ almost surely again by Lemma~\ref{lemma:rem2barriers}. As in the first part of the proof, for each fixed $W_{T_c}$ such that $V_{T_c} > B(T_c)$, equivalently $q+\sigma W_{T_c}-\sigma\sqrt{T_c}>0$, $\tau_2$ falls under the assumptions of Theorem~\ref{mainfpt}, but in this case it fulfills the conditions for an infinite mean. Thus, since $\tau^\prime_c \geq \tau_1 \, 1_{\tau>T_c} \geq \tau_2 \, 1_{\tau>T_c}$ almost surely, we find that
$$
\mathbb{E}(\tau^\prime_c\vert V_{T_c}=y,\tau>T_c) \geq \mathbb{E}(\tau_2\vert V_{T_c}=y,\tau>T_c) = \mathbb{E}(\tau_2\vert V_{T_c}=y)= \infty
$$
for every $y>B(T_c)$, where we have employed the Markovianity of Brownian motion.
Using that $1_{\tau>T_c}\leq 1$ and $T_c > 0$, we find
\begin{equation}\nonumber
\mathbb{E}(\tau) \ge \mathbb{E}(\tau 1_{\tau>T_c}) > \mathbb{E}((\tau-T_c) 1_{\tau>T_c}).
\end{equation}
Now, as before, by the law of the total expectation:
\begin{equation}\nonumber
\mathbb{E}(\tau^\prime_c) = \int\limits_{B(T_c)}^\infty \mathbb{E}(\tau^\prime_c \vert V_{T_c}=y,\tau>T_c) \, \mathbb{P}(\tau>T_c\vert V_{T_c}=y) \, \mathbb{P}(V_{T_c} \in dy).
\end{equation}
Since $\mathbb{P}(\tau > T_c) >0$  by Corollary~\ref{cor:nfpt} and
$$
\mathbb{P}(\tau > T_c)=\int\limits_{B(T_c)}^\infty \mathbb{P}(\tau>T_c\vert V_{T_c}=y) \, \mathbb{P}(V_{T_c}\in dy),
$$
we know that $\mathbb{P}(\tau>T_c\vert V_{T_c}=y)$ shares non-empty support with $\mathbb{P}(V_{T_c}\in dy)$ for $y>B(T_c)$. This implies $\mathbb{E}(\tau^\prime_c)=\infty$ and consequently $\mathbb{E}(\tau)=\infty$.
\end{proof}

\begin{corollary}\label{cor:second_fpt}
Let $B_c(t,\alpha)$ and $B(t)$ be as in the statement of Theorem~\ref{theor:second_fpt}. If
$$
\liminf\limits_{t\rightarrow\infty}\dfrac{B(t)}{B_c(t,\alpha)} \ge 1
$$
for some $\alpha>\sigma$, then $\mathbb{E}(\tau)<\infty$ and the same upper bound as that in the statement of Theorem~\ref{theor:second_fpt} holds true.
\end{corollary}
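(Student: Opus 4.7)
The plan is to reduce to Theorem~\ref{theor:second_fpt}(a) by perturbing the parameter $\alpha$. I pick any auxiliary $\alpha' \in (\sigma,\alpha)$ and exploit the explicit ratio $B_c(t,\alpha)/B_c(t,\alpha') = \exp[(\alpha-\alpha')\sqrt{t}] \to \infty$ as $t \to \infty$. Writing
$$
\frac{B(t)}{B_c(t,\alpha')} \;=\; \frac{B(t)}{B_c(t,\alpha)} \cdot \frac{B_c(t,\alpha)}{B_c(t,\alpha')},
$$
the hypothesis $\liminf_{t\to\infty} B(t)/B_c(t,\alpha) \ge 1$ guarantees that the first factor eventually exceeds, say, $1/2$, while the second diverges. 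Consequently, $\liminf_{t\to\infty} B(t)/B_c(t,\alpha') = +\infty > 1$, and since $\alpha' > \sigma$, Theorem~\ref{theor:second_fpt}(a) applied with $\alpha'$ in place of $\alpha$ already yields $\mathbb{E}(\tau) < \infty$.

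To recover the upper bound in the form stated, with $\alpha$ unchanged, I would retain the definition $T := \inf\{t \ge 0 : B(s) \ge B_c(s,\alpha)\ \forall\, s \ge t\}$ and split on its value. If $T<\infty$, the proof of Theorem~\ref{theor:second_fpt}(a) transfers verbatim: the only place where the strict liminf was used there was to secure $T<\infty$; once this is granted, the decomposition~\eqref{eq:total_expectation}, the conditional bound $\mathbb{E}(\tau' \mid W_T = z) \le (q + \sigma z)^2/(\alpha^2-\sigma^2)$ derived via optional stopping, and the estimate~\eqref{eq:for_lemma} all remain intact, since they rest on the pointwise inequality $B(t+T) \ge B_c(t,\alpha)\exp\{(\mu-\sigma^2/2)T\}$, which holds by the definition of $T$ alone. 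If instead $T=\infty$, then $\alpha\sqrt{T}=\infty$, the integrand in the claimed bound is supported on the empty interval $[\infty,\infty)$, and the bound reduces to $\infty$, which is trivially true alongside the $\mathbb{E}(\tau)<\infty$ already secured in the previous paragraph.

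The main subtlety — the only one I anticipate — is precisely this dichotomy on $T$. Under the strict inequality of Theorem~\ref{theor:second_fpt}(a), finiteness of $T$ is automatic; under the relaxed hypothesis of the corollary, the barrier may approach $B_c(\cdot,\alpha)$ from below without ever eventually dominating it, forcing $T=\infty$ and rendering the quantitative bound vacuous there. The real content of the corollary is thus carried by the $\alpha'$ reduction, while the verbatim transfer handles the remaining nondegenerate case in which the stated formula is informative.
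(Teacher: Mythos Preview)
Your proof is correct and shares its core idea with the paper's: both reduce to Theorem~\ref{theor:second_fpt}(a) by passing to an auxiliary $\alpha' \in (\sigma,\alpha)$, for which the strict inequality $\liminf_{t\to\infty} B(t)/B_c(t,\alpha') > 1$ is automatic. The difference lies in how the bound with the original $\alpha$ is recovered. The paper simply applies Theorem~\ref{theor:second_fpt}(a) at $\alpha'$ and takes the limit $\alpha' \nearrow \alpha$; this is terse but tacitly requires that $T(\alpha') \to T(\alpha)$ and that the integral behaves well under this limit. You instead observe that the strict liminf in the proof of Theorem~\ref{theor:second_fpt}(a) serves only to guarantee $T<\infty$, so when $T(\alpha)<\infty$ the argument runs verbatim at $\alpha$ itself, while when $T(\alpha)=\infty$ the stated bound degenerates to $+\infty$ and holds vacuously. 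Your route is slightly more transparent and sidesteps the limit justification; the paper's is shorter but leaves that justification to the reader.
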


\begin{proof}
Under this assumption, the condition
$$
\liminf\limits_{t\rightarrow\infty}\dfrac{B(t)}{B_c(t,\alpha')} > 1
$$
holds for every $\alpha' \in (\sigma, \alpha)$. Now, apply Theorem~\ref{theor:second_fpt} to any such $\alpha'$ and take the limit $\alpha' \nearrow \alpha$ to conclude.
\end{proof}

In this moment, two remarks are in order.

\begin{remark}\label{rem:tto0}
Note that, on one hand, part \textit{(a)} of Theorem~\ref{theor:second_fpt} is a strict generalization of the first half of part \textit{(b)} in Theorem~\ref{mainfpt}, as follows from Corollary~\ref{cor:second_fpt}. And, on the other hand, it is also a strict generalization of Corollary~\ref{cor:mainfpt}. To see this, just take the limit $T \searrow 0$ of the upper bound in the statement of Theorem~\ref{theor:second_fpt} to find the upper bound in the statement of Corollary~\ref{cor:mainfpt} as a consequence of the weak convergence, in the sense of measures, of the heat kernel towards the Dirac delta~\cite{evanspdes}.
\end{remark}

\begin{remark}
The classification derived from the statements of Theorem~\ref{theor:second_fpt} and Corollary~\ref{cor:second_fpt} is not exhaustive. Indeed, the finiteness of the first moment of the FPT remains undecidable in certain cases, including those for which $$\liminf\limits_{t\rightarrow\infty}\dfrac{B(t)}{B_c(t,\sigma)}<1 \qquad \text{and} \qquad \limsup\limits_{t\rightarrow\infty}\dfrac{B(t)}{B_c(t,\sigma)}>1$$ hold simultaneously. Even the simpler case $$\lim\limits_{t\rightarrow\infty}\dfrac{B(t)}{B_c(t,\sigma)}=1$$
remains undecidable according to this classification (unless $B(t) \le B_c(t,\sigma)$, which is given by Theorem~\ref{mainfpt}).
\end{remark}

The following result shows that, in the decidable cases, the upper bound introduced in Theorem~\ref{theor:second_fpt} and Corollary~\ref{cor:second_fpt} can be sharpened.

\begin{lemma}\label{lem:second_fpt}
Defining $q:=\ln\left(V_0/K\right)$, under the same conditions of Corollary~\ref{cor:second_fpt}, the upper bound derived in Theorem~\ref{theor:second_fpt} can be improved to:
\begin{equation}
    \nonumber
    \mathbb{E}(\tau) \le T + 
    \int\limits_{\alpha\sqrt{T}}^\infty \frac{x^2}{ \alpha^2 - \sigma^2} \,
    \Psi \! \left(q-\mathfrak{m},x-\mathfrak{m},\sigma\sqrt{T}\right)\phi_{q,\sigma\sqrt{T}}(x) \, dx,
\end{equation}
where $\mathfrak{m} := \min \{ \tilde{B}(t), 0 \le t \le T \}$, $\tilde{B}(t):=\ln\left\{B(t) \Big/\left[Ke^{(\mu-\sigma^2/2)t}\right]\right\}$, and $\Psi(a,b,c):=1-\exp\left(-2ab/c^2\right)$, with $\phi_{a,b}(x)$ defined as in the statement of Theorem~\ref{theor:second_fpt}.

Moreover, if $\tilde{B}(t)$ is concave on $[0,T]$ (supported by the maximal chord in this interval would be enough), it holds that
$$
\mathbb{E}(\tau) \le T + 
\int\limits_{\alpha\sqrt{T}}^\infty \frac{x^2}{ \alpha^2 - \sigma^2} \,
\Psi \! \left(q,x-\alpha\sqrt{T},\sigma\sqrt{T}\right)\phi_{q,\sigma\sqrt{T}}(x) \, dx.
$$
\end{lemma}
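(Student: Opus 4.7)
The plan is to return to the integral representation obtained in equation~\eqref{eq:for_lemma} in the proof of Theorem~\ref{theor:second_fpt},
$$
\mathbb{E}(\tau) \le T + \int_{\alpha\sqrt{T}}^\infty \frac{y^2}{\alpha^2 - \sigma^2} \, \mathbb{P}(\tau > T \vert \sigma W_T + q = y) \, \phi_{q,\sigma\sqrt{T}}(y) \, dy,
$$
and to replace the crude bound $\mathbb{P}(\tau > T \vert \sigma W_T + q = y) \le 1$ used there by a sharp estimate of this conditional survival probability. The device will be to recognize that, conditional on the terminal value, the relevant process is a Brownian bridge, so the classical minimum-of-a-bridge formula applies.

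For the first bound, I would rewrite the survival event in terms of the shifted process $Z_t := q + \sigma W_t$: one has $\{\tau > T\} = \{Z_t > \tilde{B}(t) \text{ for all } t \in [0,T]\}$. Since $\tilde{B}(t) \ge \mathfrak{m}$ on $[0,T]$ by definition of $\mathfrak{m}$, this event is contained in $\{\inf_{0 \le t \le T} Z_t > \mathfrak{m}\}$. Conditional on $Z_T = y$, the process $\{Z_t\}_{0 \le t \le T}$ is a Brownian bridge from $q$ to $y$ with volatility parameter $\sigma$. The classical formula $\mathbb{P}(\inf_{0\le t\le T} X_t > m) = 1 - \exp(-2(a-m)(b-m)/(\sigma^2 T))$ for a bridge $X$ from $a$ to $b$ on $[0,T]$ (valid when $m < \min(a,b)$) then gives
$$
\mathbb{P}\!\left( \inf_{0 \le t \le T} Z_t > \mathfrak{m} \,\Big|\, Z_T = y \right) = \Psi(q - \mathfrak{m}, y - \mathfrak{m}, \sigma \sqrt{T}),
$$
which is applicable on the integration range because $q > 0$, $\mathfrak{m} \le \tilde{B}(0) = 0$, and $y > \alpha\sqrt{T} \ge 0$. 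Substituting this into the integral produces the first bound.

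For the sharper bound under the concavity hypothesis, I would exploit the fact that the graph of $\tilde{B}$ on $[0,T]$ lies above the chord joining $(0,0)$ and $(T, \alpha \sqrt{T})$, so $\tilde{B}(t) \ge \alpha t /\sqrt{T}$, whence $\{\tau > T\} \subseteq \{Z_t > \alpha t /\sqrt{T} \text{ for all } t \in [0,T]\}$. The key observation is that the deterministic linear tilt $V_t := Z_t - \alpha t /\sqrt{T}$ preserves the Brownian-bridge structure: conditional on $Z_T = y$, the process $V$ has the same Gaussian covariance $\sigma^2 \min(s,t)[T-\max(s,t)]/T$ as $Z$ and its conditional mean is simply the linear interpolation between $q$ and $y - \alpha\sqrt{T}$, so $V$ is itself a Brownian bridge from $q$ to $y - \alpha\sqrt{T}$ with volatility $\sigma$. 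The same minimum-of-a-bridge formula applied with barrier level $0$ then yields
$$
\mathbb{P}\!\left( \inf_{0 \le t \le T} V_t > 0 \,\Big|\, Z_T = y \right) = \Psi(q, y - \alpha \sqrt{T}, \sigma \sqrt{T}),
$$
which is valid because $q > 0$ and $y - \alpha\sqrt{T} > 0$ on the integration domain. Plugging this into the integral representation gives the second bound.

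The step I expect to require the most care is the verification that the tilted process $V_t$ inherits a \emph{bona fide} Brownian-bridge structure with the correct endpoints and volatility, since this is precisely what drives the improvement over the uniform floor $\mathfrak{m}$; the remainder is a direct application of the classical formula for the infimum of a Brownian bridge, whose proof (via the reflection principle or the explicit hitting-time density of Brownian motion with drift) is standard and can simply be cited.
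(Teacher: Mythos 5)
Your proposal is correct and takes essentially the same approach as the paper: both start from the integral representation of $\mathbb{E}(\tau)$ and improve the crude estimate $\mathbb{P}(\tau>T\vert \sigma W_T+q=y)\le 1$ by replacing the barrier $\tilde{B}(t)$ on $[0,T]$ with a lower minorant (the constant $\mathfrak{m}$, or the chord $\alpha t/\sqrt{T}$ under concavity) and then computing the conditional survival probability exactly. The only cosmetic difference is that you invoke the classical minimum-of-a-Brownian-bridge formula directly, whereas the paper packages the same fact as equation (15) of~\cite{duffie2001term} (the conditional first-passage law of arithmetic Brownian motion) and routes the event inclusion through the auxiliary stopping times $\tau_\mathfrak{m}$, $\tau_\ell$ and Lemma~\ref{lemma:rem2barriers}; these are equivalent formulations of the same argument.
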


\begin{proof}

The proof is identical to that of Theorem~\ref{theor:second_fpt} up to equation~\eqref{eq:for_lemma}, from which we depart. Instead of using the estimate $\mathbb{P}(\tau>T\vert \sigma W_T+q=y>\alpha\sqrt{T})\leq 1$ for every $y$, we will derive some tighter bounds for this conditioned survival probability. These bounds will be based on equation~(15) in~\cite{duffie2001term}, which shows that,
for an arithmetic Brownian motion $X_t=\sigma W_t+x_0+\rho t$ with $\sigma, x_0>0$, $\rho \in \mathbb{R}$, its first passage time through zero, $\tau_0:=\inf\lbrace t\geq 0 \vert X_t=0\rbrace$, fulfills
$$
\mathbb{P}\left(\tau_0>s\vert X_s=b, X_0=a\right)=\Psi\!\left(a,b,\sigma\sqrt{s}\right)1_{\lbrace a,b>0\rbrace}
$$
with $\Psi(a,b,c)$ as in the statement of this theorem and $s>0$.

For the first case, note that $\mathfrak{m}$ is well defined as $\tilde{B}(t)$ is continuous since $B(t)$ is continuous and positive. Then, define $\tau_\mathfrak{m} := \inf \{t \ge 0 \, | \, \sigma W_t+q = \mathfrak{m} \}$. By Lemma~\ref{lemma:rem2barriers}, $\tau \le \tau_\mathfrak{m}$ almost surely, and hence $\mathbb{P}(\tau > T\vert W_T,W_0) \le \mathbb{P}(\tau_\mathfrak{m} > T\vert W_T,W_0) \, 1_{V_T>B(T)}$.

Consider now the process $X_t:=\sigma W_t+q-\mathfrak{m}$ and its associated $\tau_0$ (defined as above), since for that process $\tau_0=\tau_{\mathfrak{m}}$; for $s=T$ we obtain:
$$
\mathbb{P}(\tau > T\vert \sigma W_T+q=x) \le\mathbb{P}(\tau_\mathfrak{m} > T\vert \sigma W_T+q=x)=\Psi\!\left(q-\mathfrak{m},x-\mathfrak{m},\sigma\sqrt{T}\right).
$$
Substituting this estimate in equation~\eqref{eq:for_lemma} completes the proof of the first part of this lemma.

If $\tilde{B}(t)$ is concave on $[0,T]$, any of its chords within this interval lies below its graph. For the second part of this lemma, it is enough to assume that the maximal chord lies below the graph, i.e. 
$$\tilde{B}(t)\geq \dfrac{t}{T}\tilde{B}(T)=\alpha\dfrac{t}{\sqrt{T}}\;\forall \, t\in[0,T].$$
The proof of this second case goes as in the first, but instead of $\tau_{\mathfrak{m}}$ we use
\begin{eqnarray}\nonumber
\tau_\ell := \inf \! \left\{t \ge 0 : \sigma W_t +q=\dfrac{t}{T}\tilde{B}(T) \right\} =\inf \! \left\{t \ge 0 : \sigma W_t +q= \frac{\alpha}{\sqrt{T}} \, t \right\},
\end{eqnarray}
where we have used $\tilde{B}(T)=\alpha\sqrt{T}$. Once more by Lemma~\ref{lemma:rem2barriers} we know that $\tau \le \tau_\ell$ almost surely, and hence $\mathbb{P}(\tau > T\vert V_T) \le \mathbb{P}(\tau_\ell > T\vert V_T)$. This time, we consider the stochastic process $X_t=\sigma W_t+q-\alpha \, t/\sqrt{T}$, for which $\tau_0=\tau_\ell$, and thus:
$$
\mathbb{P}(\tau > T\vert \sigma W_T+q=x) \le\mathbb{P}(\tau_\ell > T\vert \sigma W_T+q=x)=\Psi \! \left(q,x-\alpha\sqrt{T},\sigma\sqrt{T} \right).
$$
Substituting this expression in equation~\eqref{eq:for_lemma} finishes the proof.
\end{proof}

\begin{remark}
The first bound in this lemma is sharper than that of Corollary~\ref{cor:second_fpt}, and the second is sharper than the first (although in the second case, unlike the first one, we ask for an additional property of the barrier function, which can be thought of as a weaker form of concavity restricted to the interval $[0,T]$). This is because $\tau \le \tau_\ell \le \tau_\mathfrak{m}$ almost surely by Lemma~\ref{lemma:rem2barriers}. Also, it can be immediately seen that the limit $T \searrow 0$ reduces all three upper bounds to that in the statement of Corollary~\ref{cor:mainfpt} by Remark~\ref{rem:tto0} and the properties of the inverse Gaussian distribution~\cite{Seshadri1999}.
\end{remark}

This last technical result provides two upper bounds that are more accurate than that of Theorem~\ref{theor:second_fpt}, despite their higher complexity. All of these three bounds are explicit up to quadrature, but the integrals involved are of Gaussian type, easily computable both numerically or analytically in terms of the error function. 
The next section shows how bounds on the mean FPT can be modulated when more concrete examples are considered.

\section{Examples}
\label{sec:examples}

Certain applications might suggest specific properties for the barrier functions. Throughout this section, we will illustrate how particular barriers, or families of them, can be approached with the arguments we have employed so far. This shows that complementary results to the general statements we have herein introduced are very well possible whenever additional hypotheses on the barrier functions are assumed.

We begin with an example that can be analyzed within the framework of Theorem~\ref{theor:second_fpt} but, however, we will study it with the methods used in the proof of Theorem~\ref{mainfpt}.

\begin{example}\label{examplej}
We will prove that $\mathbb{E}(\tau)<\infty$ for
$$
\tau = \inf \{t \ge 0 : V_t = B(t)\}\;\text{with}\;
B(t) = K \exp \! \left[(\mu-\sigma^2/2)t + \sigma \sqrt{t \ln(t+1)}\right] \! .
$$
This FPT can be expressed as
$$
\tau=\inf \left\{t \ge 0 : \ln(V_0/K) + \sigma W_t = \sigma \sqrt{t \ln(t+1)} \right\},
$$
which is, by Theorem~\ref{mainfpt}, finite almost surely.
As there we find
\begin{eqnarray}\nonumber
\sigma \sqrt{\tau \ln(\tau+1)} &=& \ln(V_0/K) + \sigma W_\tau, \\ \nonumber
\sigma \sqrt{(t \wedge \tau) \ln(t \wedge \tau+1)} &\le& \ln(V_0/K) + \sigma W_{t \wedge \tau},
\end{eqnarray}
so
$$
\sigma^2 \, (t \wedge \tau) \ln(t \wedge \tau+1) \le (\ln(V_0/K))^2 + \sigma^2 W_{t \wedge \tau}^2 + 2 \ln(V_0/K) \, \sigma W_{t \wedge \tau}.
$$
By taking the expectation of this inequality it follows that
\begin{eqnarray}\nonumber
\sigma^2 \, \mathbb{E} \left[(t \wedge \tau) \ln(t \wedge \tau+1) \right] &\le& (\ln(V_0/K))^2 + \sigma^2 \, \mathbb{E}(W_{t \wedge \tau}^2) \\ \nonumber &&
+ 2 \ln(V_0/K) \, \sigma \, \mathbb{E} (W_{t \wedge \tau}) 
\\
\nonumber
&=& (\ln(V_0/K))^2 + \sigma^2 \, \mathbb{E}(t \wedge \tau),
\end{eqnarray}
again by the linearity of the expectation and the optional stopping theorem along with the martingality of $W_t$ and $W_t^2 -t$. This can be rewritten as
$$
\mathbb{E} \left[(t \wedge \tau) \ln(t \wedge \tau+1) \right]
- \mathbb{E}(t \wedge \tau) \le \frac{(\ln(V_0/K))^2}{\sigma^2},
$$
and by the Jensen inequality
\begin{eqnarray}\nonumber
\mathbb{E} (t \wedge \tau) \ln \left[ \mathbb{E}(t \wedge \tau) +1 \right]
- \mathbb{E}(t \wedge \tau) &\le&
\mathbb{E} \left[(t \wedge \tau) \ln(t \wedge \tau +1) - (t \wedge \tau) \right]
\\ \nonumber
&\le& \frac{(\ln(V_0/K))^2}{\sigma^2};
\end{eqnarray}
finally, by the monotony of the logarithm, $\ln(\mathbb{E}(t \wedge \tau))\leq \ln(\mathbb{E}(t \wedge \tau)+1)$, so
\begin{eqnarray}\nonumber
\mathbb{E} (t \wedge \tau) \ln \left[ \mathbb{E}(t \wedge \tau) \right]
- \mathbb{E}(t \wedge \tau) &\le&\frac{(\ln(V_0/K))^2}{\sigma^2}.
\end{eqnarray}
This inequality can be solved to yield
$$
\mathbb{E} (t \wedge \tau) \le \frac{(\ln(V_0/K)/\sigma)^2}{\mathcal{W}_0\left[ (\ln(V_0/K)/\sigma)^2/e \right]},
$$
where $\mathcal{W}_0(\cdot)$ is the principal branch of the Lambert omega function~\cite{Corless1996LambertW}. By the monotone convergence theorem we see
$$
\mathbb{E} (\tau) = \mathbb{E} \left( \lim_{t \to \infty} t \wedge \tau \right) = \lim_{t \to \infty} \, \mathbb{E} (t \wedge \tau),$$
and thus we conclude
$$
\mathbb{E} (\tau) \le \frac{(\ln(V_0/K)/\sigma)^2}{\mathcal{W}_0\left[ (\ln(V_0/K)/\sigma)^2/e \right]};
$$
such an inequality provides an explicit upper bound for the mean FPT and consequently a proof of its finiteness. 
\end{example}

This example has provided a much simpler upper bound for the mean FPT than that given by Theorem~\ref{theor:second_fpt}. The next example shows that the same reasoning can be extended to a different class of barriers.

\begin{example}
We focus on continuous barriers of the form
$$
B(t)=K\exp\left[(\mu-\sigma^2/2)t+\tilde{B}(t)\right]
$$
with $\tilde{B}(t)$ convex and strictly increasing on $[0,\infty)$. The current assumptions imply, first, that $\tilde{B}(0)=0$ so $B(0)=K$. And, second, by Theorem~1 from Appendix~B in~\cite{evanspdes}, every convex function admits a supporting hyperplane at each of its points; since $\tilde{B}(t)$ is also strictly increasing, this means $\tilde{B}(t) \to \infty$ as $t \to \infty$ (and, moreover, the asymptotic growth must be linear or superlinear). This implies that $\tilde{B}(t)$ maps $[0,\infty)$ bijectively into itself.

The FPT $\tau:=\inf\lbrace t\geq 0 : V_t=B(t)\rbrace$ fulfills:
$$
\sigma W_\tau+q=\tilde{B}(\tau)\quad\text{and}\quad\sigma W_{t\wedge\tau}+q\geq\tilde{B}(t\wedge\tau),\;\text{ where }q:=\ln\left(\frac{V_0}{K}\right).
$$
Note that, under the stated assumptions, the barrier satisfies both conditions in  Theorem~\ref{mainfpt} (one would be enough), therefore $\tau$ is finite almost surely.
The expectation of the second equation yields
\begin{equation}\nonumber
q \ge \mathbb{E}\left[\tilde{B}(t\wedge\tau)\right] - \sigma \, \mathbb{E}\left[W_{t\wedge\tau}\right]
\ge \tilde{B}\left(\mathbb{E}[t\wedge\tau]\right),
\end{equation}
where we have used, in this order, the linearity of the expectation and the optional stopping theorem for the martingale $W_t$ along with the Jensen inequality. Then we obtain
$$
\mathbb{E}[t\wedge\tau] \le \tilde{B}^{(-1)}\left(q\right),
$$
where the inverse $\tilde{B}^{(-1)}(\cdot)$ of $\tilde{B}(\cdot)$ is well defined because the latter function is a bijection of $[0,\infty)$ into itself, $q \in \, (0,\infty)$, and the direction of the inequality is preserved since $\tilde{B}(\cdot)$ is strictly increasing. Arguing as in the previous example, we conclude
$$
\mathbb{E}(\tau) \le \tilde{B}^{(-1)}\left(q\right),
$$
what guarantees the finiteness of the mean FPT since $\tilde{B}^{(-1)}\left(\cdot\right)$ is bounded on compacts.  In fact, we already knew this because the conditions on $\tilde{B}(t)$ imply that  it fulfills the necessary conditions given by Theorem \ref{theor:second_fpt}. Examples of eligible functions are $\tilde{B}(t) \propto t^p$ with $p \ge 1$.
\end{example}

While the previous two examples have shown how upper bounds can be derived for the mean FPTs, the next will show that lower bounds are also accessible to our present methods. 

\begin{example}
Now, we change our focus to barriers of the type
$$
B(t)=K\exp\left[(\mu-\sigma^2/2)t+\tilde{B}(t)\right]
$$
with $\tilde{B}:[0,\infty)\rightarrow [0,\infty)$ continuous at the origin, strictly increasing, concave, and such that $\tilde{B}(t) \to \infty$ as $t \to \infty$. As before, from the general conditions on barriers it follows that $\tilde{B}(0)=0$; hence $\tilde{B}(t)$ is a bijection on $[0,\infty)$. Moreover, the inverse function $\tilde{B}^{(-1)}:[0,\infty)\rightarrow [0,\infty)$ exists, is continuous, and strictly increasing (the existence and continuity of the inverse of a strictly increasing and continuous function is a standard result of real analysis, see for instance Chapter 2 of \cite{royden2010real}). Now we will prove a lower bound for the mean of the FPT $\tau:=\inf\lbrace t\geq 0 : V_t=B(t)\rbrace$. First of all, as in the previous example we note that
$$
\sigma W_\tau+q=\tilde{B}(\tau) \; \text{ with }q:=\ln\left( V_0/K \right).
$$
Also, since $\tilde{B}^{(-1)}(\cdot)$ is bounded on compacts and $q>0$, we have that $\tilde{B}^{(-1)}(q) \in \, (0,\infty)$. If $\mathbb{E}(\tau)=\infty$ for a barrier of this type, the bound
$$
\mathbb{E}(\tau)\geq{\tilde{B}}^{-1}(q)
$$
holds trivially. Then, we may assume $\mathbb{E}(\tau) < \infty$. By martingality of $W_t^2-t$ and the optional stopping theorem we know that
$\mathbb{E}(W_{t\wedge\tau}^2)=\mathbb{E}(t\wedge\tau)$. Since $\mathbb{E}(\tau) < \infty$, then $\tau$ is finite almost surely and $t \wedge \tau \le \tau$ almost surely. Therefore:
$$
\sup_{t \ge 0} \mathbb{E} (W_{t\wedge\tau}^2)=\sup_{t \ge 0} \mathbb{E}(t\wedge\tau) \le \mathbb{E}(\tau).
$$
Since $t \wedge \tau \to \tau$ as $t \to \infty$ almost surely, by the Doob martingale convergence theorem we deduce that $W_{t \wedge \tau} \to W_\tau$ as $t \to \infty$ both almost surely and in $L^2(\Omega)$, and consequently in $L^1(\Omega)$, since $W_t$ is a martingale. Therefore
$$
\mathbb{E}(W_{\tau}) = \lim_{t \to \infty} \mathbb{E}(W_{t \wedge \tau}) = 0
$$
by the optional stopping theorem, and hence $\mathbb{E}(\tilde{B}(\tau))=q$ by the linearity of the expectation. Given that $\tilde{B}(t)$ is concave, $-\tilde{B}(t)$ is convex, thus, by the Jensen inequality:
$$
-\tilde{B}(E(\tau))\leq-\mathbb{E}(\tilde{B}(\tau))\Rightarrow \tilde{B}(E(\tau))\geq\mathbb{E}(\tilde{B}(\tau))=q.
$$
So, we conclude
$$
\mathbb{E}(\tau)\geq{\tilde{B}}^{-1}(q)>0
$$
again. Some examples of eligible functions are $\tilde{B}(t) \propto t^p$ with $0 < p \le 1$,  $\tilde{B}(t)\propto \sqrt{t\log(1+t)}$, and $\tilde{B}(t)\propto \sqrt{t\log(1+\log(1+t))}$. Note that these are close to the critical cases from the viewpoint of Theorem~\ref{mainfpt}.

\end{example}

Combining upper and lower bounds might yield exact results under more restrictive conditions. This is what we show next.

\begin{example}
A barrier that fulfills all the properties assumed in the two previous examples is necessarily of the form
$$
B(t)=K\exp\left[(\mu-\sigma^2/2)t+ \nu t \right]
$$
with $\nu >0$. Hence, applying the bounds in these examples we conclude that
$$
\mathbb{E}(\tau) = \ln\left( V_0/K \right)/\nu.
$$
Of course, the same could be derived from the classical results on FPTs of Brownian motion through linear barriers (see Section 3.5.C in~\cite{karatzas1991brownian}).
\end{example}

The previous examples dealt with bounds for the mean FPT. The next one illustrates how the theory introduced in this work might help to improve bounds on the survival probability obtained by means of classical arguments.

\begin{example}
Let us consider the following barrier function
$$
B(t) = \left\{ K \exp \left[(\mu-\sigma^2/2)t + \sigma \, t \, \sin(t) \right] \right\}
$$
and its associated FPT $\tau=\inf \{t \ge 0 : V_t = B(t) \} \equiv \inf \{t \ge 0 : W_t = \ln(K/V_0)/\sigma + t \, \sin(t) \}.$
Since $-t \le t \, \sin(t) \le t$, we can use the classical results for the probability of the finiteness of the FPT of Brownian motion through a linearly moving barrier (see again Section 3.5.C in~\cite{karatzas1991brownian}) to derive the crude estimate
$$
\left(\frac{K}{V_0}\right)^{2/\sigma} \le \mathbb{P}(\tau < \infty) \le 1,
$$
where $(K/V_0)^{2/\sigma} < 1$. However, one might invoke Theorem~\ref{mainfpt} to conclude that actually $\mathbb{P}(\tau < \infty) = 1$ in this case.
\end{example}

Our final example shows that a modification of the last barrier cannot be classified by a direct application of the results derived in this work. However, it can be analyzed using analogous mathematical machinery.

\begin{example}\label{undecidible}
Consider the variant of the previous barrier
$$
B(t) = \left\{ K \exp \left[(\mu-\sigma^2/2)t - \sigma \, t \, |\sin(t)| \right] \right\},
$$
then
$\tau = \inf \{t \ge 0 : V_t = B(t) \}
\equiv \inf \{t \ge 0 : W_t = \ln(K/V_0)/\sigma - t \, |\sin(t)| \}.
$
Since $-t \le t \, \sin(t) \le 0$, we can derive the same crude estimate for $\mathbb{P}(\tau < \infty)$ as before, but, in this case, our results so far cannot improve it, as this is one of the ``undecidable'' barriers from Remark~\ref{rem:undecidable_cases}. Nevertheless, one can improve this result by other means. Indeed, build the series of random variables
$$
X_n= W_{n \pi}-W_{(n-1)\pi}, \qquad n=1,2,3,\cdots\;,
$$
which are independent and identically distributed with $X_n \sim \mathcal{N}(0,\pi)$. Then, we apply the law of the iterated logarithm for series of random variables (the Hartman-Wintner law of iterated logarithm~\cite{deAcosta1983,HartmanWintner1941}), to find
$$
\liminf_{n \to \infty} \frac{\sum_{m=1}^n X_m}{\sqrt{2 \pi n \ln(\ln(n))}}=-1
$$
almost surely. Hence, since this sum is telescopic, we find:
$$
\liminf_{n \to \infty} W_{n \pi} = \liminf_{n \to \infty} \sum_{m=1}^n X_m = - \infty
$$
almost surely. And thus, for a large enough $n$ (depending on the realization of $W_t$), we have
$$
W_{n \pi} < \ln(K/V_0)/\sigma = \ln(K/V_0)/\sigma - n \pi \, |\sin(n \pi)|\;\;\text{almost surely.}
$$
Consequently, by the almost sure continuity of Brownian motion and the intermediate value theorem, we conclude that $\mathbb{P}(\tau < \infty)=1$.
\end{example}

This final example shows that the results presented in this work have been stated in quite generality, which in a sense limits their sharpness when it comes to their application to specific barrier shapes. On the other hand, the mathematical methods tend to be quite robust, and they still work for these particular shapes by changing the assumptions employed.




\section{Conclusions}
\label{sec:conclusions}

This work has focused on establishing a qualitative classification of financial risks based on pure mathematical criteria. Our main assumption is to consider the stock price being modeled by a geometric Brownian motion. This allows to establish relatively simple mathematical results with full precision, but at the same time highlights that the problem is non-trivial even if posed in these simplified terms. We have studied the ruin problem that consists in the first passage time of the geometric Brownian motion through a freely moving barrier, only assumed to be continuous and deterministic. Within this framework, we have established a triple classification of financial risks:
\begin{enumerate}
\item The {\it green flag zone}: the first passage time is infinite with positive probability.
\item The {\it yellow flag zone}: the first passage time is finite almost surely, but it has an infinite mean.
\item The {\it red flag zone}: the mean first passage time is finite (and therefore the first passage time is finite almost surely).
\end{enumerate}
Although a practical application of this classification would need a reformulation in each particular case, it shows two things: How such a classification can be carried out on a strictly mathematical basis and the difficulty in making this classification exhaustive even in the present idealized terms.

To be precise, we have considered our geometric Brownian motion to be
$$
V_t=V_0 \, \exp \left[(\mu-\sigma^2/2)t+\sigma W_t \right],
$$
where $W_t$ is a standard Brownian motion. We have analyzed the first passage problem of this process through a deterministic barrier $B(t)$, which is assumed to be a continuous function (initialized at $B(0) \equiv K$). Continuous functions are known to be potentially very pathological, as the literature on singular functions highlights, see for instance~\cite{Falconer2004,FernandezSanchez2012,Kawamura2011,ParadisViaderBibiloni2001}. Despite of this possible irregularity, we have found that the finiteness of the first passage time and its mean can be classified only according to the long-time asymptotic behavior of this function. To partially summarize our results let us define:
\begin{eqnarray}\nonumber
I_\pm &:=& \liminf_{t \to \infty} B(t) \Big/ \left\{ K \exp \left[(\mu-\sigma^2/2)t \pm \sigma \sqrt{2t \ln(\ln(t))}\right] \right\},
\\ \nonumber
S_\pm &:=& \limsup_{t \to \infty} B(t) \Big/ \left\{ K \exp \left[(\mu-\sigma^2/2)t \pm \sigma \sqrt{2t \ln(\ln(t))}\right] \right\}.
\end{eqnarray}
Note that $S_- \ge S_+ \ge I_+$ and $S_- \ge I_- \ge I_+$. We also define
\begin{eqnarray}\nonumber
\bar{I}_\varepsilon &:=& \liminf_{t \to \infty} B(t) \Big/ \left\{ K \exp \left[(\mu-\sigma^2/2)t + (\sigma+\epsilon)\sqrt{t}\right] \right\},
\\ \nonumber
\bar{S}_0 &:=& \limsup_{t \to \infty} B(t) \Big/ \left\{ K \exp \left[(\mu-\sigma^2/2)t + \sigma \sqrt{t}\right] \right\},
\end{eqnarray}
where $\varepsilon \ge 0$. Note also that, uniformly in $\varepsilon \ge 0$, $\bar{S}_0 \ge \bar{I}_\varepsilon$, $I_- \ge 
\bar{I}_\varepsilon \ge I_+$, and $S_- \ge 
\bar{S}_0 \ge S_+$, so consequently $\bar{S}_0 \ge I_+$ and $S_- \ge \bar{I}_\varepsilon$; obviously, the order relation among these six quantities is only partial. They allow us to summarize part of our results as follows:

\begin{itemize}
\item Case $\bar{I}_\varepsilon \ge 1$ for some $\varepsilon>0$, no matter how small. By Theorem~\ref{theor:second_fpt} and Corollary~\ref{cor:second_fpt}, $\mathbb{E}(\tau)<\infty$ and hence $\tau$ is finite almost surely, i.e. $\mathbb{P}(\tau<\infty)=1$. So this case lies in the red flag zone. Moreover, explicit upper bounds for $\mathbb{E}(\tau)$ have been provided in Lemma~\ref{lem:second_fpt}.

\item Case $I_- \ge 1$ and simultaneously $\bar{S}_0 < 1$. By Theorems~\ref{theor:maincb} and~\ref{theor:second_fpt}, $\tau$ is finite almost surely, while at the same time $\mathbb{E}(\tau)=\infty$. Then, this case lies in the yellow flag zone.

\item Case $S_- < 1$. By Theorem~\ref{theor:tlti}, $\mathbb{P}(\tau = \infty)>0$, implying $\mathbb{E}(\tau)=\infty$. Thus, this case lies in the green flag zone.

\item Although this classification is rather extensive, it is not exhaustive in the sense that some barriers cannot be classified within a colored flag zone:

\begin{itemize}
\item For instance, by Remark~\ref{rem:undecidable_cases}, the case for which $S_- \ge 1$, $I_- < 1$, and $S_+ \le 1$ remains undecidable in the sense that we cannot establish whether $\mathbb{P}(\tau = \infty)>0$ or $\mathbb{P}(\tau = \infty)=0$ for this barrier. Yet, barriers of this type can be analyzed by other methods, as Example~\ref{undecidible} shows.

\item We can even distinguish between a {\it twilight zone} for which we know that $\mathbb{E}(\tau)=\infty$ but do not know about the almost sure finiteness of $\tau$, and a {\it dark zone} for which we know nothing about the finiteness of any of them. The first contains the case $S_- \ge 1$, $I_- < 1$, and $\bar{S}_0 < 1$ (by Theorem~\ref{theor:second_fpt}), while the second includes the case $\bar{S}_0 \ge 1$, $I_- < 1$, and $S_+ \le 1$ (except when the barrier coincides with the critical one).

\item The twilight zone with $\mathbb{P}(\tau<\infty)=1$ but no knowledge on the finiteness of the mean of $\tau$ is also possible and realized in cases like $S_+>1$ and simultaneously $\bar{I}_0 < 1$ (by Theorems~\ref{mainfpt} and~\ref{theor:second_fpt}). Clearly, no other types of twilight zones are possible.
\end{itemize}
\end{itemize}

It is interesting to note that this classification does not respect any clear symmetry. For example, this can be seen in the absence of the quantity $I_+$ within it. Perhaps, a way to improve this classification is to exploit a hidden symmetry or to include this absent quantity. In any case, these are just two possibilities among many potential alternatives.

This work can be, in fact, generalized along many different directions. One of them would be to consider time-dependent parameters for the Black-Scholes model. The case of time-dependent drift $\mu(t)$ but constant volatility $\sigma$ can be addressed with the present results by introducing the transformation
$$
B(t)=K \exp \!\left(\int\limits_0^t \mu(s) \, ds -\sigma^2 t/2+\tilde{B}(t) \!\right);
$$
this reduces the problem again to analyze the FPT of the process $\ln(V_0/K)+\sigma W_t$ through the barrier $\tilde{B}(t)$, as we have already done herein. Diving deeper into this direction might be possible through the introduction of a piecewise constant volatility, or through interdependent parameters, such that $\mu(t)\propto\sigma(t)$, seizing the methods in~\cite{brigo2009credit,molini2011first}. It would also be interesting to consider barriers that are discontinuous and/or stochastic, like for example c\`adl\`ag processes. On the modeling side, our results could also be extended by considering limited access to the market, such as possessing incomplete and/or noisy information about the process $V_t$, like in~\cite{duffie2001term}. This might be specially interesting in regards to the applications, as it is known to help linking structural and intensity models~\cite{jarrow2012structural}, and to reproduce the short-maturity-time non-null credit spread seen in real markets~\cite{duffie2001term} (that many structural models fail to do~\cite{goldstein2001ebit}). Abounding in this direction, the introduction of specific models for barrier functions, particularly those with practical meaning, would contribute to a better characterization of their asymptotic behavior and consequently to hone our classification. On the theoretical side, there are some methods that we have not used so far, but that can be useful in sharpening or complementing our developments. From our viewpoint, these include the Strassen law of the iterated logarithm~\cite{Strassen1964,SzuszVolkmann1982} and the theories of large~\cite{GerholdGerstenecker2020}, moderate~\cite{Chen2005}, and small deviations~\cite{deAcosta1983smalldeviations}.

In summary, we have mathematically characterized three zones of high, medium, and low risk of default in a Black-Scholes market. While our work is purely methodological, our inspiration comes from the discrete classification of risk zones made by credit rating agencies like S\&P Global Ratings, Moody's, and Fitch Group. Evidently, their classifications are far more complex, whilst ours relies only on determining the finiteness of the first passage time of the value process through an arbitrarily moving barrier, as well as the finiteness of its mean. Of course, the advantage of our set of simplified assumptions is that it allows to prove precise mathematical statements
without compromising the freedom of choice of arbitrary continuous barriers. Still, when the asymptotic behavior of the barrier is oscillatory, there appear difficulties in building simple criteria to classify its risk (when the behavior is not oscillatory, the situation becomes simpler, see Remark~\ref{rem:undecidable_cases}). This result might be a call of attention to potential attempts of risk classification in changing environments, like those caused by extreme weather conditions. Our mathematical framework is suitable for financial scenarios with infinite maturity times, which could be regarded as a model for those cases in which debt is continuously refinanced (such as sovereign debt). However, in other frequent financial scenarios, it is relevant to know if default occurs before a finite time known as the maturity of debt; if this time cannot be considered as a distant future in a given situation, then our results would not be directly applicable (in such a case, perhaps, one can borrow more inspiration from the physics of phase transitions to include in our theory the so-called {\it finite-size effects}~\cite{quantum2021zinn}). All in all, our analysis refines the classical conclusions drawn from the Black-Cox model, as it focuses on a more detailed time structure of the debt dynamics. In practical terms, it might serve as a first step to ascertain how safety covenants must be modulated in cases considered doubtful by traditional models, or what is the effect of a non-steady inflationary rate, as explained in Section~\ref{sec:Inflation}.

\newpage

\section*{Acknowledgements}
    This work has been partially supported by the Government of Spain (Ministerio de Ciencia e Innovaci\'on) and the European Union through Projects PID2021 - 125871NB - I00, TED2021 - 131844B - I00 / AEI / 10.13039 / 501100011033 / Uni\'on Europea Next Generation EU / PRTR, and CPP2021 - 008644 / AEI / 10.13039 / 501100011033 / Uni\'on Europea Next Generation EU / PRTR.

\bibliography{main}
\bibliographystyle{plain}
\end{document}